\def\duzomniejsze{<\kern-.7mm<}
\def\duzowieksze{>\kern-.7mm>}
\def\textbf#1{{\bf #1}}
\def\beq{\begin{equation}}
\def\eeq{\end{equation}}
\def\be{\begin{equation}}
\def\ee{\end{equation}}
\def\ben{\begin{eqnarray}}
\def\een{\end{eqnarray}}
\def\beqa{\begin{eqnarray}}
\def\eeqa{\end{eqnarray}}
\def\eea{\end{array}}
\def\bea{\begin{array}}
\newcommand{\bei}{\begin{itemize}}
\newcommand{\eei}{\end{itemize}}
\newcommand{\bee}{\begin{enumerate}}
\newcommand{\eee}{\end{enumerate}}
\def\bare{{\bar{E}}}
\def\bcal{{\cal B}}
\def\pcal{{\cal P}}
\def\rcal{{\cal R}}
\def\hcal{{\cal H}}
\def\tr{{\rm Tr}}
\def\id{{\rm I}}
\def\>{\rangle}
\def\<{\langle}
\def\blacksquare{\vrule height 4pt width 3pt depth2pt}
\def\ot{\otimes}
\def\dt#1{{{\kern -.0mm\rm d}}#1\,}
\def\aaa{{A_1\ldots A_m}}
\def\aaacolon{{A_1:\ldots :A_m}}
\def\eqsq{E_{sq}^{q}}
\def\ecsq{E_{sq}^{c}}
\def\Sn{S_m}
\def\ep{{\epsilon}}
\def\ind{{i_1\ldots i_m}}
\def\systems{{(m)AA'}}
\def\psystems{{(m)A'}}
\def\csystems{{(m)A}}
\def\tsystems{{(m)\tilde{A}}}
\def\tsystemse{{\tsystems E}}
\def\initrho{\rho_{\tsystems}}
\def\initrhoe{\rho_{\tsystemse}}
\def\tmhilbert{{\hcal_{(m)\tilde{A}}}}
\def\pmhilbert{{\hcal_{(m)A'}}}
\def\mhilbert{{\hcal_{(m)A}}}
\def\tmnhilbert{{\hcal_{(m)\tilde{A}}^{(n)}}}
\def\tmhilberte{{\hcal_{(m)\tilde{A} E}}}
\def\pdit{pdit}
\def\proof{{\bf Proof.\ }}
\def\mnchie{{\chi_{(m)A E}^n}}
\def\mchi{{\chi_{(m)A}}}
\def\mnchiep{{\chi_{(m)A' E}^n}}
\def\mdist{{P_{(m)A}}}
\newtheorem{lemma}{Lemma}
\newtheorem{theorem}{Theorem}
\newtheorem{proposition}{Proposition}
\newtheorem{definition}{Definition}
\begin{document}

\title{Squashed entanglement for multipartite states
and entanglement measures based on the mixed convex roof}
\author{Dong Yang, Karol Horodecki, Micha\l{} Horodecki, Pawe\l{} Horodecki, Jonathan Oppenheim, Wei Song
\thanks{Dong Yang was at the Laboratory for Quantum Information, China Jiliang University, Hangzhou, Zhejiang 310018, China.(e-mail:dyang@cjlu.edu.cn)}
\thanks{Karol Horodecki was at the Faculty of Mathematics, Physics and Computer Science, University of Gda\'nsk, 80--952 Gda\'nsk, Poland.(e-mail:khorodec@inf.univ.gda.pl)}
\thanks{Micha\l{} Horodecki was at the Institute of Theoretical Physics and Astrophysics
 University of Gda\'nsk, 80--952 Gda\'nsk, Poland.(e-mail:fizmh@univ.gda.pl)}
\thanks{Pawe\l{} Horodecki was at the Faculty of Applied Physics and Mathematics, Technical University of Gda\'nsk, 80--952 Gda\'nsk, Poland.(e-mail:pawel@mifgate.pg.gda.pl)}
\thanks{Jonathan Oppenheim was at the Department of Applied Mathematics and Theoretical Physics, University of Cambridge, U.K.(e-mail:J.Oppenheim@damtp.cam.ac.uk)}
\thanks{Wei Song was at the Department of Modern Physics, University of Science and Technology of China, Hefei, Anhui 230026, China.(e-mail:wsong1@mail.ustc.edu.cn)}
\thanks{We acknowledge support from EU grants IP QAP IST-015848 and IP SCALA IST-015714. J.O. also acknowledges the Royal Society. K.H. acknowledges the support of the Foundation for Polish Science. D.Y. acknowledges the support from NSF of China (Grant No. 10805043).}
}

\maketitle
\date{\today}

\begin{abstract}
  New measures of multipartite entanglement are constructed based on two definitions of multipartite information and     different methods of optimizing over extensions of the states. One is a generalization of the squashed entanglement where one takes the mutual information of parties conditioned on the state's extension and takes the infimum over such extensions. Additivity of the multipartite squashed entanglement is proved for both versions of the multipartite information which turn out to be related. The second one is based on taking classical extensions. This scheme is generalized, which enables to construct measures of entanglement based on the {\it mixed convex roof} of a quantity, which in  contrast to the standard convex roof method involves optimization  over all decompositions of a density matrix rather than just the decompositions into pure states. As one of the possible applications of these results we prove that any multipartite monotone is an upper bound on the amount of multipartite distillable key. The findings are finally related to analogous results in classical key agreement.

{\keywords Squashed entanglement, c-squashed entanglement, Mixed convex roof, Multipartite distillable key}
\end{abstract}

\section{Introduction}
\label{int} Many bipartite entanglement measures are known, but few are known for multipartite states. Among the known bipartite entanglement measures, the {\it squashed entanglement} is distinct from other measures in its additivity and interesting construction \cite{Tucci2002-squashed,Winter-squashed-ent}. It is based on quantum conditioning. In general, we do not know how to condition upon a quantum register. However if a function is built out of entropies, then there is a more or less magic way of conditioning, which is based on the fact that, accidentally, conditioning classical entropies can be represented by subtracting the conditioning entropy from the joint entropy -- the entropy of the state on $A$ conditioned by the state on $B$ can be written formally as \beq S(A|B)=S(AB)-S(B). \label{eq:cond} \eeq

Quantum conditional entropy often takes negative values. However, squashed entanglement is built out of the quantum mutual information, which remains positive upon conditioning. The definition of squashed entanglement is inspired by the intrinsic information, a quantity used in classical cryptography.

Interestingly, no multipartite generalization of this measure was proposed so far. In this paper we propose two generalizations, based on two versions of the multiparty  mutual information. One of these mutual information generalizations was considered already in \cite{Lindblad73-mutual,Tato-index}, while the second one, a kind of dual to the latter,  was introduced in \cite{Cerf-secr-mono}. The authors considered both  generalizations of mutual information to multipartite setting as so-called secrecy monotones. Thus it is natural to use both to generalize squashed entanglement. We prove that our multipartite versions of squashed entanglement are additive.


In the definition of bipartite squashed entanglement \cite{Winter-squashed-ent}, one considers conditional mutual information over extensions of the original state $\rho_{AB}$ to a third system $E$. We do the same in the multipartite case. The conditioning here is {\it quantum}: the system $E$ is a quantum register that means the extension state is quantum-correlated between $AB$ and $E$. This kind of conditioning is applicable only to linear combinations of quantum entropies.

On the contrary, we know how to condition {\it any} function upon a {\it classical} register $E$.
Here a question arises: {\it for which functions, can  we obtain a measure of entanglement based on classical conditioning?} In this paper we also attempt to answer this question and show that if the function $f$ satisfies the following conditions \bei \item[i)] $f$  is invariant under local unitary transformations \item[ii)] $f$ does not increase under local measurement on average \eei then, via conditioning, it gives  rise to an entanglement measure.  The measure turns out to be of  similar construction to the convex roof  \cite{Uhlmann-roof}, where one optimizes over all pure decompositions of a given state. Here we instead consider decompositions into mixed states (see e.g. \cite{Christandl:2002,Synak05-asym,YangHHS2005-cost}), \be E(\rho)=\inf_{\sum_i p_i\rho_i=\rho} \sum_i p_i f(\rho_i). \ee Such a construction is called the {\it mixed convex roof}. This approach works in both bipartite as well as the multipartite scenario. If the function is the mutual information, we have then two measures, the original squashed entanglement ({\it q-squashed} entanglement) and the classical squashed ({\it c-squashed}) entanglement. It should be mentioned that for bipartite systems the latter measure was already considered in \cite{Tucci2002-squashed} as an interpolation between entanglement of formation and squashed entanglement. It was also independently proposed in \cite{Christandl:2002,NagelR2003-ent-mutual}, however, monotonicity of this measure was not proved. Within our approach, the c-squashed entanglement is a special case of mixed convex roof measures, hence it is monotonous.

We also discuss asymptotic continuity of our entanglement measures. In particular, multipartite squashed entanglement is shown to satisfy this condition, which comes from the Alicki-Fannes result \cite{Alicki-Fannes} (cf. \cite{Winter-squashed-ent}). On the other hand the measures  based on mixed convex roof inherit the asymptotic continuity from the original functions \cite{Synak05-asym}.

The rest of the paper is organized as follows. Section \ref{bis} briefly recalls the construction of bipartite squashed entanglement. Section \ref{cond} discusses two formulas of multipartite mutual information, then defines their conditional  versions and explores their properties. Section \ref{mse} presents two multipartite entanglement measures. One is the q-squashed entanglement that is based on conditioning on a quantum register, which is the generalization of bipartite squashed entanglement. We consider two versions of q-squashed entanglement, based on two different formulas of multipartite mutual information. We prove additivity for both versions. The other entanglement measure is the c-squashed entanglement that uses conditioning on a classical register. It can also be obtained  via the mixed convex roof. Asymptotic continuity is also discussed. In Section \ref{sec:upper-bound}, we show that any multipartite monotone that satisfies reasonable axioms is an upper bound on distillable key, providing q-squashed entanglement as an example. It is a direct generalization of a bipartite result given in \cite{EkertCHHOR2006-ABEkey}, based on the multipartite key distillation protocol given in \cite{Augusiak-multi}. In Section \ref{subsec:classical-ubounds} we consider the classical analogue of the latter fact, that has been already presented in \cite{Cerf-secr-mono}. We consider here more carefully the issue of asymptotic continuity and the presence of the eavesdropper.  Section \ref{sum} provides a summary.

\section{Bipartite squashed entanglement}
\label{bis} In this section, we briefly recall the definition and properties of the original squashed entanglement. It is based on a simple quantity - the quantum mutual information, \be I(A:B)=S(A) + S(B) - S(AB), \ee where $S(X)$ is the von Neumann entropy of system $X$. Then one defines the conditional mutual information as \be I(A:B|E)= S(AE) + S(BE) - S(ABE) - S(E), \ee by simple application of the definition Eq. (\ref{eq:cond}) to the mutual information. The entanglement measure is then obtained by ``squashing''.
\begin{definition}
For a given state $\rho_{AB}$, squashed entanglement is given by \be E_{sq}(\rho_{AB}) = {1\over 2} \inf I(A:B|E), \ee where the infimum is taken over all extensions $\rho_{ABE}$ of the original state.
\end{definition}
The factor $1/2$ is taken just for convenience and an extension is any joint state $\rho_{ABE}$ such that tracing out the state on $E$ leaves $\rho_{AB}$ on $AB$.

Squashed entanglement satisfies not only the basic properties of entanglement measures - monotonicity under local operations and classical communication (LOCC), but also other useful properties, notably additivity. It is also a good upper bound for distillable entanglement.

\section{Conditional multipartite mutual information}
\label{cond}

In this section, we discuss the mutual information for multipartite states and then study the properties of conditional multipartite mutual information.

There are at least two candidates for multipartite mutual information. One is more in the spirit of the Venn diagram representation of information-theoretic identities \cite{CoverThomas}, where the bipartite mutual information corresponds to the intersection of two sets. To extend this notion to the multipartite case, we would need to take the intersection of all sets. For example in the tripartite case this approach would give the following candidate for mutual information \be S(A)+S(B)+S(C) - S(AB) -S(AC) - S(BC) + S(ABC). \ee
This function can however be negative. The other approach, which produces a much simpler function, is to notice that mutual information is the relative entropy between the state and its marginal \be I(\rho_{AB})=S(\rho_{AB}||\rho_A\ot \rho_B), \ee where the relative entropy is defined as \be S(\rho||\sigma)=tr\rho(\log\rho-\log\sigma).\ee Thus multipartite mutual information would be \be I(A:B:C)=S(\rho_{ABC}|| \rho_A\ot \rho_B \ot \rho_C). \ee
\begin{figure}
\label{fig:venn} \centering
\includegraphics[scale=1.2]{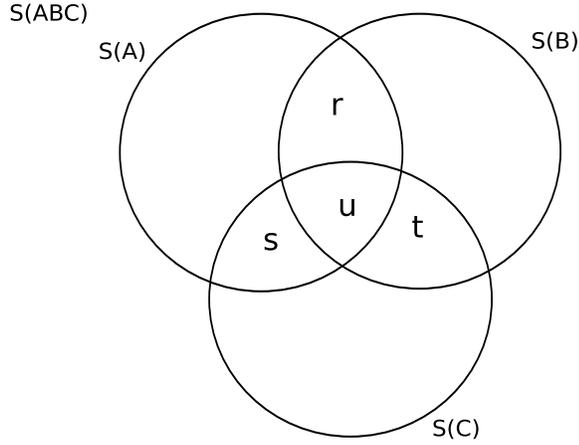}
\caption{A graphical representation of entropic quantities for the case of three parties. Here, the region that represents $I(A:B)$ is $r+u$.  $S_3=r+s+t+u$ and $I(A:B:C)=r+s+t+2u$}
\end{figure}
This turns out to be a very simple function considered by Lindblad \cite{Lindblad73-mutual} and used in \cite{Tato-index} to describe correlations within multipartite quantum systems:\be I(A_1:A_2:\ldots : A_m )=\sum_{i=1}^m S(A_i) - S(A_1 A_2 \ldots A_m). \ee In this paper we will adopt this function as one measure of multipartite mutual information. Since it can be represented as the relative entropy between a state and the product of its marginals, it is a monotone under local operations (because of the monotonicity of relative entropy \cite{Lindblad75-relative,Uhlmann-relative}).

This version of multipartite mutual information has yet another interesting feature \cite{Cerf-secr-mono}: it can be represented as a sum of bipartite mutual information\ben I(A_1:A_2:\ldots : A_m )=I(A_1:A_2) + I(A_3:A_1A_2)\nonumber\\ + I(A_4:A_1A_2 A_3) + \ldots I(A_m:A_1 A_2 \ldots A_{m-1}) \label{eq:multi-bi}. \een Put in a different way, $I$ satisfies the recurrence relation \ben I(A_1:A_2:\ldots : A_m)&=&I(A_1:A_2:\ldots : A_{m-1})\nonumber\\&&+ I(A_m:A_1\ldots A_{m-1}) \label{eq:recur-I}. \een So it is actually a very simple quantity, which is built as follows: first we consider the (original) mutual information between two chosen parties. Then we want to take into account the next party, so we take the mutual information between it and the previous two parties. In general, to take into account a new party means to add the mutual information between this party and all the other ones (treated jointly). Moreover this quantity is independent of the selected order in which we combine them. It is also obvious that the expression is invariant under permutation of the systems. The fact that multipartite mutual information can be decomposed into bipartite mutual information already indicates that most of the proofs for the original mutual information in the bipartite case can be applied to multipartite case without great modification.

There is yet another formula for mutual information in the multipartite case, as proposed in \cite{Cerf-secr-mono}, where
multipartite {\it secrecy monotones} were considered. It is of the form \be \Sn=\sum_{i=1}^m S(\rho_{A_1,\ldots, A_{i-1},A_{i+1},\ldots,A_m})-(m-1) S(\rho_{A_1\ldots A_m}). \label{eq:defsn} \ee It has an alternative expression in terms of conditional mutual information \cite{Cerf-secr-mono} (cf. Eq. \ref{eq:multi-bi}): \ben \Sn &=&I(A_1:A_2 \ldots  A_m) + I(A_2:A_3\ldots A_m|A_1) \nonumber\\
&&+ I(A_3:A_4\ldots A_m|A_1A_2)+ \ldots \nonumber\\
&&+ I(A_{m-1}:A_m|A_1\ldots A_{m-2}). \een It turns out that the two candidates for mutual information, $I$ and $S_m$ are dual to each other: \ben &&\Sn+ I(A_1:A_2:\ldots : A_m )\nonumber\\ &=& \sum_{i=1}^m I(A_i:A_1\ldots A_{i-1} A_{i+1} \ldots A_m) \label{eq:s-plus-i}, \een i.e. they add up to the sum of all bipartite mutual information between a single system and the rest. In the case of bipartite systems, both candidates are equal to the bipartite mutual information.

In the following section we will show that both mutual information formulas can be used to construct a good entanglement monotone.

Before we define the entanglement measure, let us discuss the problem of conditioning. In general, we do not know what it means to condition upon a quantum register. However it turns out that if a function is a linear combination of entropies, one can condition such a function using a mysterious prescription  for conditioning entropy: \be S(A|B)= S(AB) - S(B).\nonumber \ee This quantity can be negative, however it has a good interpretation \cite{SW-nature}. For linear combination of entropies, we apply this formula to each term. Such conditioning is well defined, in a sense that the result of conditioning does not depend on how we have represented the function in terms of entropies, e.g. a function $f(AB)=S(A) + S(AB) - 2 S(AB)$ has the same conditioned version as in the case it is written as $f(AB)=S(A) - S(AB)$. This way of conditioning is rather blind, and it does not tell us how to do it in the case of other functions.

Note that this conditioning reduces to the usual one when the register $E$ is classical. Another interesting feature is the following \be f(A_1, \ldots, A_m|EE') = f((A_1, \ldots, A_m|E)|E') \label{eq:cond-comp}, \ee for $f$ a linear combination of entropies and the right hand side has the following meaning: we first compute the function $f(A_1,\ldots, A_m|E)$. It is again linear combination of entropies, so one can apply to it conditioning again. Thus conditioning is {\it associative}.

Since conditioning does not depend on how we decompose the function into entropies, we get that the conditional mutual information is a sum of conditional bipartite mutual information. For example we have: \ben I(A_1:A_2: \ldots |E) =I(A_1:A_2|E)+I(A_3:A_1A_2|E) \nonumber\\
+ \ldots+I(A_m:A_1\ldots A_{m-1}|E). \label{eq:multi-bi-cond} \een

Let us examine properties of both mutual and conditional mutual information in the multipartite case. We will provide here two properties that were proved in \cite{Winter-squashed-ent} for the bipartite case. It would be very convenient to have a chain rule, however, we do not have it. In place of the chain rule we have two identities that still prove useful:

\begin{proposition}{\it The following
identities hold for multipartite mutual informations.}
\ben
&&I(X\aaacolon)=\nonumber\\
&&I(\aaacolon|X)+\sum_{i=2}^m I(X:A_i),
\label{eq:rule1}\\
&&I(X\aaacolon)=\nonumber\\
&&I(\aaacolon)+\sum_{i=2}^m I(X:A_i|A_1\ldots A_{i-1}),
\label{eq:rule2}\\
&&\Sn(X\aaacolon)=\nonumber\\
&&\Sn(\aaacolon|X) + I(X:A_2\ldots A_m),
\label{eq:rule3}\\
&&\Sn(X\aaacolon)=-I(A_2:\ldots:A_m)\nonumber\\
&&+\sum_{i=2}^{m}I(A_i:XA_1\ldots A_{i-1} A_{i+1} \ldots A_m). \label{eq:rule4} \een Moreover conditional mutual informations are additive.
\end{proposition}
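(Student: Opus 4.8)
The plan is to prove all four identities by the same elementary mechanism and then dispatch additivity separately. Every quantity in the Proposition---$I$, $\Sn$, and their conditioned versions---is by construction a finite linear combination of von Neumann entropies, and the blind conditioning rule $S(\,\cdot\,|X)=S(\,\cdot\,X)-S(X)$ sends a conditioned expression to another such combination. Hence each identity is, after expansion, a purely algebraic statement about von Neumann entropies verified by collecting like terms. A point of notation to fix first is that $X\aaacolon$ denotes the $m$-partite split in which $X$ has been \emph{merged into the first party}, i.e. the parties are $XA_1, A_2, \ldots, A_m$; so the identities describe how $I$ and $\Sn$ change when the first party is enlarged from $A_1$ to $XA_1$. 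Under this reading the right-hand sides, whose sums run from $i=2$, come out correctly.

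Rather than grind through the entropy bookkeeping, I would route the two identities for $I$ through the bipartite chain rule $I(X:A_1\ldots A_m)=I(X:A_1)+\sum_{i=2}^m I(X:A_i|A_1\ldots A_{i-1})$, which is available since multipartite mutual information decomposes into bipartite pieces (Eq. \ref{eq:multi-bi}) and conditioning distributes over that decomposition (Eq. \ref{eq:multi-bi-cond}). First I would establish the two ``increment'' relations $I(X\aaacolon)=I(\aaacolon)+I(X:A_2\ldots A_m|A_1)$ and $I(\aaacolon|X)=I(\aaacolon)-\sum_{i=1}^m I(X:A_i)+I(X:A_1\ldots A_m)$ by a short entropy computation. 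Substituting the chain-rule expansion of $I(X:A_2\ldots A_m|A_1)$ into the first yields Eq. \ref{eq:rule2}; combining both increments with the chain rule for $I(X:A_1\ldots A_m)$ (which absorbs the $I(X:A_1)$ term and shifts the sum to start at $i=2$) yields Eq. \ref{eq:rule1}.

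For the two $\Sn$ identities I would exploit the duality relation (Eq. \ref{eq:s-plus-i}) together with the recurrence for $I$ (Eq. \ref{eq:recur-I}), since $\Sn$ itself has no usable chain rule. Applying Eq. \ref{eq:s-plus-i} to the split $\{XA_1,A_2,\ldots,A_m\}$ rewrites $\Sn(X\aaacolon)$ as $-I(X\aaacolon)$ plus the sum of all single-party-versus-rest bipartite mutual informations; the recurrence $I(X\aaacolon)=I(A_2:\ldots:A_m)+I(XA_1:A_2\ldots A_m)$ then cancels the composite bipartite term against $I(X\aaacolon)$ and leaves exactly Eq. \ref{eq:rule4}. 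Equation \ref{eq:rule3} I would obtain by direct expansion: conditioning $\Sn$ on $X$ simply adjoins $X$ to each of its marginals while producing a residual $-S(X)$, and adding $I(X:A_2\ldots A_m)$ both removes that residual and cancels the single marginal (the one omitting $A_1$) by which the conditioned expression still differs from $\Sn(X\aaacolon)$.

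Finally, additivity of the conditional mutual informations is immediate: each is a fixed linear combination of von Neumann entropies, and $S(\rho\otimes\sigma)=S(\rho)+S(\sigma)$, so on a product state (with every system and the conditioning register factoring accordingly) each entropy term splits and the whole combination adds. The only genuine obstacle lies in the $\Sn$ identities, Eqs. \ref{eq:rule3}--\ref{eq:rule4}: because $\Sn$ admits no recurrence one cannot argue inductively and must instead lean on the duality (Eq. \ref{eq:s-plus-i}) to convert $\Sn$-statements back into $I$- and bipartite statements, or else track the ``leave-one-out'' marginals carefully through the enlargement of the first party. That bookkeeping, with its competing $(m-1)$ coefficients and omitted-index marginals, is where sign and index errors are most likely to arise.
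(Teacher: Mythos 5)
Your proof is correct and takes essentially the same route as the paper's: the bipartite chain rule $I(XY:Z)=I(X:Z)+I(Y:Z|X)$ together with the decompositions (\ref{eq:multi-bi}) and (\ref{eq:multi-bi-cond}) for Eqs. (\ref{eq:rule1})--(\ref{eq:rule2}), direct expansion of the definition (\ref{eq:defsn}) for Eq. (\ref{eq:rule3}), the duality (\ref{eq:s-plus-i}) combined with the recurrence (\ref{eq:recur-I}) for Eq. (\ref{eq:rule4}), and additivity of von Neumann entropy on tensor products for the final claim. Your preliminary fixing of the notation, reading $X\aaacolon$ as the split $XA_1:A_2:\ldots:A_m$ with $X$ merged into the first party, is also exactly the interpretation the paper uses (and needs, e.g., in the proof of its Proposition 2).
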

{\bf Proof.} To get identities for $I$ one uses repeatedly the chain rule for bipartite mutual information that asserts $I(XY:Z)=I(X:Z)+I(Y:Z|X)$, and Eq.(\ref{eq:multi-bi}) and Eq.(\ref{eq:multi-bi-cond}). Eq. (\ref{eq:rule3}) follows directly from definition (\ref{eq:defsn}). The last identity (see \cite{Cerf-secr-mono}) follows from Eqs. (\ref{eq:recur-I}) and  (\ref{eq:s-plus-i}).

Additivity follows from the fact that conditional mutual informations are  linear combinations of entropies which are additive.

We now immediately get basic monotonicity properties of mutual information (both $I$ and $\Sn$), which are enough to ensure that ``squashing" will produce a function that  will be a monotone under LOCC.
\begin{proposition}
\label{prop:im-mon} The following statements hold. \bei \item[(i)] Mutual informations do not increase under local conditioning, \ben I(X\aaacolon) \geq I(\aaacolon|X),\nonumber\\ \Sn(X\aaacolon) \geq \Sn(\aaacolon|X). \label{eq:mon-c} \een \item[(ii)] Conditional mutual informations do not increase under local operations, \ben I(\aaacolon|E)\geq I(\Lambda(A_1):\ldots:A_m|E),\nonumber\\ \Sn(\aaacolon|E)\geq \Sn(\Lambda(A_1):\ldots:A_m|E), \een where $\Lambda(\cdot)$ is a local operation on subsystem $A_1$ (a trace-preserving completely positive map).\eei
\end{proposition}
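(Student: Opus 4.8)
The plan is to reduce both statements to ingredients already on the table: the identities of Proposition~1 for part~(i), and the monotonicity of the ordinary \emph{bipartite} conditional mutual information for part~(ii). Part~(i) should cost almost nothing. The identity (\ref{eq:rule1}) reads $I(X\aaacolon)=I(\aaacolon|X)+\sum_{i=2}^m I(X:A_i)$, and since any mutual information is a relative entropy it is nonnegative, so every term $I(X:A_i)\ge 0$ and the inequality $I(X\aaacolon)\ge I(\aaacolon|X)$ follows at once. Symmetrically, (\ref{eq:rule3}) gives $\Sn(X\aaacolon)=\Sn(\aaacolon|X)+I(X:A_2\ldots A_m)$, and dropping the nonnegative last term yields the claim for $\Sn$.

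For part~(ii) I would first isolate one bipartite fact: the conditional mutual information $I(C:D|E)$ does not increase when a trace-preserving completely positive map $\Lambda$ is applied to $C$ (and, by symmetry, to $D$). I would prove this by Stinespring dilation, writing $\Lambda$ as an isometry $V:C\to C'F$ followed by tracing out $F$. The isometry leaves the four entropies entering $I(C:D|E)$ unchanged, so $I(C:D|E)=I(C'F:D|E)$; the bipartite chain rule then gives $I(C'F:D|E)=I(C':D|E)+I(F:D|C'E)\ge I(C':D|E)$, the inequality being exactly strong subadditivity, and tracing out $F$ identifies $C'$ with $\Lambda(C)$. With this lemma in hand I would expand the conditional multipartite informations into conditional bipartite pieces and verify the inequalities term by term. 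For $I$, formula (\ref{eq:multi-bi-cond}) writes $I(\aaacolon|E)$ as $I(A_1:A_2|E)+I(A_3:A_1A_2|E)+\ldots+I(A_m:A_1\ldots A_{m-1}|E)$; in every summand $A_1$ occupies one of the two correlated slots while the conditioning set contains only $E$, so $\Lambda$ acts as a local operation on one of the two arguments and each term can only decrease, giving $I(\aaacolon|E)\ge I(\Lambda(A_1):\ldots:A_m|E)$.

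The hard part is $\Sn$, and this is where I expect the only genuine obstacle. The analogous chain for $\Sn$ (with $E$ adjoined to every conditioning set, legitimate by the associativity of conditioning, Eq.(\ref{eq:cond-comp})) will in general place $A_1$ \emph{inside} the conditioning of some terms, and there monotonicity simply fails: feeding a channel-processed register into the conditioning can move correlations in either direction. The remedy is to exploit the manifest permutation symmetry of $\Sn$ --- evident from the symmetric definition (\ref{eq:defsn}) --- to reorder the decomposition so that $A_1$ is taken \emph{last}. Then each of the $m-1$ conditional bipartite terms carries $A_1$ in its ``other party'' slot and never in the conditioning, so the same term-by-term application of the lemma goes through and delivers $\Sn(\aaacolon|E)\ge\Sn(\Lambda(A_1):\ldots:A_m|E)$. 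Banishing $A_1$ from every conditioning set by this reordering is the one nontrivial step, and it is what separates the $\Sn$ case from the essentially automatic $I$ case.
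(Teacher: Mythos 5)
Your proof is correct, and while part (i) is exactly the paper's argument (drop the nonnegative bipartite terms in Eqs.~(\ref{eq:rule1}) and (\ref{eq:rule3})), your part (ii) takes a genuinely different route. The paper never isolates a bipartite channel-monotonicity lemma: instead it dilates the channel at the multipartite level, writing $\Lambda$ as adjoining an ancilla $X$ in a pure state, a unitary on $XA_1$, and a partial trace, and using invariance of $I(\cdot|E)$ and $\Sn(\cdot|E)$ under local unitaries and ancillas. The entire burden then falls on the single partial-trace inequality $I(X\aaacolon|E)\geq I(\aaacolon|E)$ and its $\Sn$ analogue, which the paper gets by conditioning the identities (\ref{eq:rule2}) and (\ref{eq:rule4}) on $E$ (via associativity, Eq.~(\ref{eq:cond-comp})) and dropping nonnegative conditional bipartite terms --- the same strong-subadditivity engine that powers your Stinespring argument. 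The contrast is sharpest for $\Sn$, where you correctly diagnose the obstacle (a processed register inside a conditioning set can drive conditional mutual information up) and cure it by permuting the parties so that $A_1$ sits last in the chain decomposition; the paper cures it differently, by using identity (\ref{eq:rule4}), which by construction places $XA_1$ only in non-conditioning slots, with the leftover term $-I(A_2:\ldots:A_m)$ untouched by tracing out $X$. What each approach buys: yours is modular --- a self-contained bipartite lemma applied term by term, valid for arbitrary CPTP maps without any global dilation bookkeeping, and it makes explicit why $\Sn$ needs special handling; the paper's is economical --- it needs only the identities already proved in Proposition~1 plus nonnegativity of bipartite conditional mutual information, never a standalone lemma, at the price of routing everything through the ancilla-plus-unitary representation of the channel.
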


{\bf Proof.} The statement (i) follows from Eqs.(\ref{eq:rule1}) (\ref{eq:rule3}), and the fact that bipartite mutual information is nonnegative. The statement (ii) comes from two facts: one is that any local operation can be realized in the unitary representation $\Lambda^Q(\rho_Q)=tr_R U_{RQ}(|0\>\<0|)_{R}\otimes\rho_{Q}U_{RQ}^{\dagger}$ and $I(\cdot|E)$ is invariant under local unitary operation and adding ancilla; the other fact is that \be I(X\aaacolon|E) \geq I(\aaacolon|E)\label{cond:tr}.\ee To get Eq.(\ref{cond:tr}), we apply conditioning to both sides of Eq.(\ref{eq:rule2}) and using associativity of conditioning obtain \ben I(X\aaacolon|E)=I(\aaacolon|E)\nonumber\\ +\sum_{i=2}^m I(X:A_i|A_1\ldots A_{i-1}E). \een Then, the required inequality follows from the fact that bipartite conditional mutual information is nonnegative. Just by conditioning  (\ref{eq:rule4}) we get the statement (ii) for $\Sn$. This completes the proof. \blacksquare

\section{Multipartite squashed entanglement}
\label{mse} In this section we will define two squashed measures and prove that they are monotones under LOCC. One is the q-squashed entanglement when the extension register is quantum. The other is the c-squashed entanglement when the extension register is classical.

\subsection{Monotonicity of entanglement measures under LOCC}
First we review what is meant by a monotone under LOCC.
\begin{definition}
A function $E$ is called a LOCC monotone if it satisfies the following condition: \be E(\rho_{AB})\geq \sum_ip_i E(\sigma_{AB}^i), \ee where $i$ are outcomes of the LOCC operation, $p_i$ are probabilities of outcomes, and $\sigma_{AB}^i$  is the state  given that outcome $i$ was obtained. \label{def:mon}
\end{definition}

We have the following proposition \cite{MH2004-mono}.
\begin{proposition}
\label{prop:locc1} A convex function $f$ does not increase under  LOCC if and only if

\noindent{\rm [LUI]} $f$ satisfies  local unitary invariance \be f(U_A\ot U_B \rho_{AB} U_A^\dagger\ot U_B^\dagger)= f(\rho_{AB}), \ee

\noindent {\rm [FLAGS]} $f$ satisfies \be f(\sum_ip_i \rho_{AB}^i \ot |i\>_X\<i|)=\sum_ip_i f( \rho_{AB}^i), \label{eq:flags} \ee for $X=A',B'$ where $|i\>$ are local, orthogonal ``flags''.
\end{proposition}

\subsection{Q-squashed entanglement}
\label{sec:mon} Now we define the multipartite q-squashed entanglement as follows
\begin{definition}
For the $m$ party state $\rho_{A_1,\ldots, A_m}$ \be \eqsq(\rho_{A_1,\ldots, A_m})=\inf I(A_1:A_2:\ldots :A_m|E), \ee where the infimum is taken over states $\sigma_{A_1,\ldots, A_m,E}$ that are extensions of $\rho_{A_1,\ldots, A_m}$, i.e. $\tr_E \sigma= \rho$.
\end{definition}
Alternatively, one can use $\Sn$. We will perform all proofs for $I$, however we will use only Proposition \ref{prop:im-mon}, as well as the fact that if the register $E$ is classical, then conditioning upon it regains its usual meaning. Thus all results are valid also for $S_m$, so that it gives rise to an independent definition of multipartite squashed entanglement.

The measures we propose reduce to twice the original squashed entanglement in the case of two parties. It resembles in some respect  relative entropy distance from the set of completely separable states, i.e. states of the form \be \sum_i p_i \rho^i_{A_1}\ot  \ldots\ot \rho^i_{A_m}. \ee Namely, it attributes nonzero entanglement to states that do not contain true multipartite entanglement, such as many singlet pairs. It is immediate to show that $\eqsq$ satisfies condition LUI. In the following we will show that it is convex, and satisfies FLAGS.

{\bf Convexity}. Consider two states $\rho_{\aaa }$ and $\sigma_{\aaa}$, and their arbitrary extensions $\rho_{\aaa E}$ and $\sigma_{\aaa E}$. We have \be \eqsq(p \rho_{\aaa} + (1-p) \sigma_{\aaa}) \leq I(\aaa|EE')_{\tau} \ee with \ben \tau_{\aaa EE'}=p \rho_{\aaa E}\ot |0\>_{E'}\<0| \nonumber\\+(1-p) \sigma_{\aaa E}\ot |1\>_{E'}\<1|. \een This is because the latter state is an extension of the state $p \rho + (1-p) \sigma$. However, since register $E'$ is classical, the conditioning is reduced to the usual one, so that we have \ben I(\aaa|EE')_{\tau} =p I(\aaa|E)_\rho \nonumber\\+ (1-p) I(\aaa|E)_\sigma, \een where we have used associativity of conditioning. Since we can choose arbitrary extensions of $\rho$ and $\sigma$, we get that right-hand-side can be arbitrary close to $p \eqsq(\rho) + (1-p) \eqsq(\sigma)$. This proves convexity.\ \blacksquare

{\bf FLAGS}. We decompose the condition into two inequalities. The inequality \be \eqsq(\sum_ip_i \rho_{\aaa}^i \ot |i\>_X\<i|)\leq \sum_ip_i \eqsq( \rho_{\aaa}^i) \ee follows from convexity (which we have just proved) and from invariance of $\eqsq$ under adding local pure ancilla. The latter is obvious, since any extension of a state $\rho_{\aaa}\ot |0\>_{A_1'}\<0|$ is of the form $\rho_{\aaa E} \ot |0\>_{A_1'}\<0|$ where $\rho_{\aaa E}$ is extension of $\rho_\aaa$. (Then it follows from additivity of entropy.)

The converse inequality \be \eqsq(\sum_ip_i \rho_{\aaa}^i \ot |i\>_X\<i|)\geq \sum_ip_i \eqsq( \rho_{\aaa}^i) \ee can be interpreted as non-increasing upon conditioning under the local classical register. We have \be \eqsq(\sum_ip_i \rho_{\aaa}^i \ot |i\>_X\<i|)=\inf I(X\aaacolon|E)_\tau, \ee where $\tau$ is the extension of the state $\sum_ip_i \rho_{\aaa}^i \ot |i\>_X\<i|$. Now we dephase register $X$. Since it is a local operation, according to Prop. \ref{prop:im-mon} the conditional mutual information can only go down. Thus we have \ben && I(X\aaacolon|E)_\tau \geq
I(X\aaacolon|E)_{\tilde \tau}\geq  \nonumber \\
&& \geq I(\aaacolon|EX)_{\tilde\tau}=\sum_ip_i I(\aaacolon|E)_{\tau_i} \nonumber\\
&& \geq \sum_ip_i\eqsq(\rho_\aaa), \een where $\tilde \tau$ is the dephased version of $\tau$ and $\tau_i$ is the dephased version of $\tau$, given $X=i$, which is still a valid extension of our state. The second inequality comes from Prop. \ref{prop:im-mon}. The equality comes from the fact that after dephasing the register is classical. \ \blacksquare

From the definition of $\eqsq$, we can see that no assumption on the classicality of register $E$ is made. So generally the register $E$ is quantum-correlated with system $\aaa$ in the extensions. What about the case when we allow only classical registers in the extensions i.e. the register and the system is classically correlated? Can it still be a monotone under LOCC? The answer is yes. This comes from the result that we know how to condition any function upon classical registers. In fact, we will show that mutual information is just a special case immediately obtained from a general theorem.

{\bf Remark:} In a recent paper \cite{Oppenheim-paradigm-E}, the bipartite squashed entanglement was given an operational interpretation in terms of communication rate in state merging. In \cite{Hayden-Multisquash}, the multipartite q-squashed entanglement on $I_m$ with the factor $1/2$ was developed independently and was related to the outer bound on the communication rate in multiparty distributed compression. To find a similar situation for the mutipartite q-squashed entanglement on $S_m$ is an open question.

\subsection{Additivity of q-squashed entanglement}
\label{sec:add} In this section we show that the two versions of multipartite q-squashed entanglement are additive on tensor product and subadditive in general. Namely it turns out that the main ideas behind the proof of additivity of bipartite squashed entanglement can be carried over  to multipartite case. We will show explicitly the proof for three parties. The generalization of n parties is immediate.

The main ingredient is the following lemma. {\lemma \label{lem:chain} \ben
I(AA':BB':CC'|E)= I(A:B:C|A'B'C'E)
\nonumber \\+I(A':B':C'|E)+I(A:B'C'|A'E)
\nonumber \\+I(B:A'C'|B'E)+I(C:A'B'|C'E),\\
{}\nonumber\\
S_3(AA':BB':CC'|E)= S_3(A:B:C|A'B'C'E)
\nonumber\\+S_3(A':B':C'|E)+I(AB:C'|A'B'E)
\nonumber\\+I(AC:B'|A'C'E)+I(BC:A'|B'C'E). \een }

{\it Proof. ~} \ben
&&I(AA':BB':CC'|E)-I(A:B:C|A'B'C'E)\nonumber\\
&&~~~~~~~~~~~~~~~~~~~~~~~~~~~-I(A':B':C'|E)\nonumber\\
&&=S(AA'E)+S(BB'E)+S(CC'E)\nonumber\\
&&+3S(A'B'C'E)-S(AA'B'C'E)-S(BA'B'C'E)\nonumber\\
&&-S(CA'B'C'E)-S(A'E)-S(B'E)-S(C'E)\nonumber\\
&&=S(AA'E)+S(A'B'C'E)-S(AA'B'C'E)\nonumber\\
&&-S(A'E)+S(BB'E)+S(A'B'C'E)\nonumber\\
&&-S(BA'B'C'E)-S(B'E)+S(CC'E)\nonumber\\
&&+S(A'B'C'E)-S(CA'B'C'E)-S(C'E).
\een

\ben
&&S_3(AA':BB':CC'|E)-S_3(A:B:C|A'B'C'E)\nonumber\\
&&~~~~~~~~~~~~~~~~~~~~~~~~~~~~~-S_3(A':B':C'|E)\nonumber\\
&&=S(AA'BB'E)+S(AA'CC'E)+S(BB'CC'E)\nonumber\\
&&~~~~~~~~~~~~~~~~~~~~~~~~~~~~~-S(ABA'B'C'E)\nonumber\\
&&-S(ACA'B'C'E)-S(BCA'B'C'E)+3S(A'B'C'E)\nonumber\\
&&-S(A'B'E)-S(A'C'E)-S(B'C'E)\nonumber\\
&&=I(AB:C'|A'B'E)+I(BC:A'|B'C'E)\nonumber\\
&&~~~~~~~~~~~~~~~~~~~~~~~~~~~+I(CA:B'|C'A'E).\blacksquare\een
Since conditional mutual information is positive, we obtain immediately

{\corrolary \label{cor:superadd} \ben
I(AA':BB':CC'|E)\ge I(A:B:C|A'B'C'E)\nonumber\\+I(A':B':C'|E),\\[2mm]
S_3(AA':BB':CC'|E)\ge S_3(A:B:C|A'B'C'E)\nonumber\\+S_3(A':B':C'|E). \een } The lemma \ref{lem:chain} allows also to prove: {\lemma \label{lem:add} {\bf Conditional $I$ and $S_3$ are additive.} For product states $\rho_{ABCE}\otimes \sigma_{A'B'C'E'}$ \ben
I(AA':BB':CC'|EE')= I(A:B:C|E)\nonumber\\+I(A':B':C'|E'),\\[2mm]
S_3(AA':BB':CC'|EE')= S_3(A:B:C|E)\nonumber\\+S_3(A':B':C'|E'). \een } {\it Proof.} From lemma \ref{lem:chain} we get \ben
I(AA':BB':CC'|EE')&=&I(A:B:C|A'B'C'EE')\nonumber\\+I(A':B':C'|EE')
&+&\underbrace{I(A:B'C'|A'EE')}_{=0}\nonumber\\+\underbrace{I(B:A'C'|B'EE')}_{=0}&+&\underbrace{I(C:A'B'|C'EE')}_{=0},\nonumber\\
=I(A:B:C|E)&+&I(A':B':C'|E'). \een
The last three terms vanish because the state is a tensor product: the primed systems are not correlated with the non-primed. Similarly:
\ben
S_3(AA':BB':CC'|EE')&=& S_3(A:B:C|A'B'C'EE')\nonumber\\
+S_3(A':B':C'|EE')&+&\underbrace{I(AB:C'|A'B'EE')}_{=0}      \nonumber\\
+\underbrace{I(AC:B'|A'C'EE')}_{=0}&+&\underbrace{I(BC:A'|B'C'EE')}_{=0}
\nonumber\\ = S_3(A:B:C|E)&+&S_3(A':B':C'|E'),
\een
which ends the proof.
An immediate conclusion from the  corollary \ref{cor:superadd} and  lemma \ref{lem:add}  is that $E_{sq}^q$ based on either of the mutual informations is additive. {\proposition For any $n$-partite states $\rho$ and $\sigma$, \be E_{sq}^q(\rho\otimes \sigma)=E_{sq}^q(\rho)+E_{sq}^q(\sigma). \ee } From Corollary \ref{cor:superadd} it follows that $E_{sq}^q$ is superadditive, \be E_{sq}^q(\rho_{12})\geq E_{sq}^q(\rho_1)+E_{sq}^q(\rho_2), \ee where the subscripts denote two $n$ partite systems.

\subsection{Entanglement measures via mixed convex roof}
In this subsection, we prove a theorem that provides a method of constructing entanglement measures via mixed convex roof.

\begin{definition} A mixed convex roof of a function $g$ is given by
\be E_g(\rho)=\inf \sum_i p_i g(\rho_i), \ee where infimum is taken over all ensembles $\{ p_i, \rho_i\}$ satisfying $\sum_i p_i \rho_i=\rho$.
\end{definition}

{\bf Remark:} In \cite{Synak05-asym} it was shown that if $g$ is continuous, then there exists a finite decomposition that realizes the infimum. Moreover it was shown, that if $g$ is asymptotically continuous, then so is its mixed convex roof (which is a generalization  of \cite{Nielsen-cont}).

\begin{theorem} For any continuous function $g$ which is a monotone on average under local operations,
its mixed convex roof is monotonic under LOCC.
\end{theorem}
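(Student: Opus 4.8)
The plan is to invoke Proposition~\ref{prop:locc1}: since that result characterizes LOCC monotonicity of \emph{convex} functions through the two conditions [LUI] and [FLAGS], it suffices to show that the mixed convex roof $E_g$ is convex and satisfies these two conditions, whereupon monotonicity under LOCC is immediate. I would first dispatch convexity directly from the definition. Given $\rho=p\rho^{(1)}+(1-p)\rho^{(2)}$ and decompositions $\{q^{(1)}_k,\rho^{(1)}_k\}$, $\{q^{(2)}_k,\rho^{(2)}_k\}$ realizing (or, using the remark of \cite{Synak05-asym} on the existence of finite optimal decompositions, $\epsilon$-realizing) $E_g(\rho^{(1)})$ and $E_g(\rho^{(2)})$, the merged ensemble $\{pq^{(1)}_k,\rho^{(1)}_k\}\cup\{(1-p)q^{(2)}_k,\rho^{(2)}_k\}$ is a decomposition of $\rho$, so $E_g(\rho)\le pE_g(\rho^{(1)})+(1-p)E_g(\rho^{(2)})$. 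Condition [LUI] is equally direct: a local unitary $U$ maps decompositions of $\rho$ bijectively onto decompositions of $U\rho U^{\dagger}$, and $g$ is itself locally-unitarily invariant (a local unitary is a reversible local operation, so monotonicity of $g$ applied to $U$ and to $U^{\dagger}$ forces $g(U\rho U^{\dagger})=g(\rho)$), whence $E_g(U\rho U^{\dagger})=E_g(\rho)$.

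Next I would record one auxiliary fact used in both halves of [FLAGS]: $g$, and hence $E_g$, is unchanged by appending or discarding a local pure flag, i.e. $g(\tau\otimes|i\rangle_X\langle i|)=g(\tau)$. This holds because appending the ancilla $|i\rangle_X$ and tracing it out are both deterministic local operations, so the monotonicity hypothesis on $g$ applied in each direction gives the two opposite inequalities. With this in hand, the easy inequality $E_g(\sum_i p_i\rho^i\otimes|i\rangle_X\langle i|)\le\sum_i p_i E_g(\rho^i)$ follows by taking a near-optimal decomposition of each $\rho^i$, tensoring the flag $|i\rangle_X$ onto each member, and combining these into one decomposition of the flagged state, invoking convexity together with the flag-invariance to strip the flags.

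The heart of the argument, and the step I expect to be the main obstacle, is the reverse inequality $E_g(\sum_i p_i\rho^i\otimes|i\rangle_X\langle i|)\ge\sum_i p_i E_g(\rho^i)$, i.e. that $E_g$ does not drop when the classical flag is read out. Here I would take an arbitrary decomposition $\{q_j,\sigma_j\}$ of $\sigma=\sum_i p_i\rho^i\otimes|i\rangle_X\langle i|$ and act on each $\sigma_j$ with the local von Neumann measurement of $X$ in the flag basis $\{|i\rangle\}$. Since $g$ does not increase on average under this local measurement, $g(\sigma_j)\ge\sum_i s_{ji}\,g(\tilde\sigma^i_j)$, where $s_{ji}=\langle i|_X\sigma_j|i\rangle_X$ is the outcome probability and $\tilde\sigma^i_j$ is the post-measurement state with the flag stripped off (using flag-invariance). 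Summing against $q_j$ and exchanging summation order, the crux is the bookkeeping: projecting the identity $\sum_j q_j\sigma_j=\sigma$ onto $|i\rangle_X$ yields $\sum_j q_j s_{ji}\tilde\sigma^i_j=p_i\rho^i$, so for each fixed $i$ the family $\{q_j s_{ji}/p_i,\tilde\sigma^i_j\}_j$ (whose weights sum to $1$ since $\sum_j q_j s_{ji}=p_i$) is a genuine decomposition of $\rho^i$. Lower-bounding each inner sum by $p_i E_g(\rho^i)$ therefore gives $\sum_j q_j g(\sigma_j)\ge\sum_i p_i E_g(\rho^i)$, and taking the infimum over $\{q_j,\sigma_j\}$ establishes the inequality.

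Combining convexity, [LUI], and both halves of [FLAGS] with Proposition~\ref{prop:locc1} concludes that $E_g$ is an LOCC monotone. The only delicate points are the measurement-and-regroup bookkeeping in the last paragraph (which hinges essentially on the on-average monotonicity of $g$ under the flag measurement) and the flag-invariance lemma, which is where the full strength of ``monotone under local operations'' — rather than mere non-increase under measurement — is consumed.
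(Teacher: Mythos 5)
Your proposal is correct, but it takes a genuinely different route from the paper's. The paper never invokes Proposition \ref{prop:locc1} for this theorem: it proves convexity and then directly verifies Vidal's criterion \cite{Vidal-mon2000}, i.e.\ that $E_g$ does not increase on average under an \emph{arbitrary} local measurement. Its core step is the same Bayes-type regrouping you use, but applied in the opposite configuration: the paper takes a (near-)optimal decomposition $\{p_i,\rho_i\}$ of the \emph{input} state, lets the given measurement act on each member to produce ensembles $\{q_{k|i},\sigma_k^{(i)}\}$, and observes that for each outcome $k$ the family $\{q_{i|k},\sigma_k^{(i)}\}$ with $q_{i|k}=p_i q_{k|i}/q_k$ is a decomposition of the output state $\sigma_k$, which yields $E_g(\rho)\geq \sum_k q_k E_g(\sigma_k)$ in three lines. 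You instead verify only the special flag-readout measurement (your hard half of [FLAGS], where the arbitrary decomposition lives on the \emph{flagged} state and is regrouped after measuring the flag), plus [LUI] and convexity, and then delegate the passage to full LOCC monotonicity to the heavier black-box characterization of \cite{MH2004-mono}. Both routes are sound, and your bookkeeping ($\sum_j q_j s_{ji}\tilde\sigma^i_j = p_i\rho^i$, weights summing to one) is accurate. What each buys: your route is exactly parallel to how the paper itself proves monotonicity of the q-squashed entanglement in Section \ref{sec:mon}, so it is stylistically economical; the paper's route for this theorem is more self-contained (only Vidal's standard criterion is assumed) and is slightly more modular, since it shows that $E_g$ inherits monotonicity from $g$ measurement-by-measurement. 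One small caveat on your side: Proposition \ref{prop:locc1} is stated in the paper for bipartite states, so for the multipartite scope of the theorem you must appeal to its multipartite version --- though the paper commits the same implicit extension when applying it to $E_{sq}^{q}$, so this is not a gap relative to the paper's own standard.
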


Let us recall that to prove monotonicity of multipartite squashed entanglement we have used two features of mutual information proven in proposition \ref{prop:im-mon}: monotonicity under local operations, and monotonicity under local conditioning. Note that those two features, if we consider only conditioning upon classical register, are equivalent to monotonicity on average under local operations.

{\bf Proof.} First, it is immediate to see that  $E_g$ is convex. It remains to show that also $E_g$ is monotone under local measurements, on average \cite{Vidal-mon2000}. We will simply show that for any given measurement, if $g$ does not increase under this measurement for any state, $E_g$ does not either. Consider arbitrary state $\rho$ with optimal decomposition $\{p_i,\rho_i\}$ (if there does not  exist such ensemble, we can  take a nearly optimal one, and the proof still goes through). The measurement transforms the state $\rho$ into ensemble: \be \rho \to \{ q_k, \sigma_k\} \ee and the members $\rho_i$ of the optimal ensemble into ensembles $\{q_{k|i}, \sigma_k^{(i)}\}$. Thus $q_{k|i}$  is the probability of obtaining outcome $k$ provided the state $\rho_i$ was sent. The probability that $\rho_i$ was sent, if outcome $k$ was obtained  is  given by $q_{i|k}=p_i q_{k|i}/ q_k$. Note that \be \sum_i q_{i|k} \sigma_k^{(i)} = \sigma_k. \label{eq-ik} \ee Now we have \ben E_g(\rho)= \sum_i p_i g(\rho_i)\geq \sum_i p_i  \sum_k q_{k|i} g(\sigma_k^{(i)})\nonumber\\=\sum_k q_k \sum_iq_{i|k} g(\sigma_k^{(i)})\geq  \sum_k q_k E_g(\sigma_k). \een The first equality comes from the fact that the ensemble $\{p_i, \rho_i\}$ is optimal, whose existence is guaranteed by continuity of $g$ (see above remark). The first inequality comes from monotonicity of $g$ under the considered measurement. The last inequality, comes from the fact that $\{q_{i|k},\sigma_k^{(i)}\}$ is some ensemble of $\sigma_k$ (it follows from Eq.(\ref{eq-ik})). This completes the proof.\ \blacksquare

A new entanglement measure can be constructed via mixed convex roof if a function satisfying the condition of the theorem is found. An example illustrating the theorem
 is that one can take as $g$ the function $I_c$, defined as supremum of mutual
information of the joint probability distribution after Bob and Alice's measurement. That is \be I_c(\rho_{AB})=\sup_{A_{i}^{\dagger}A_i,B_{i}^{\dagger}B_i}H(p_A)+H(p_B)-H(p_{AB}), \ee where $p^{i}_{AB}=tr A_i\otimes B_{i}\rho_{AB}A_{i}^{\dagger}\otimes B_{i}^{\dagger}$. Another one is the c-squashed entanglement defined below.

\subsection{C-squashed entanglement}
Multipartite c-squashed entanglement is defined as follows.
\begin{definition}
For the $m$ party state $\rho_{A_1,\ldots, A_m}$ \be \ecsq(\rho_{A_1,\ldots, A_m})=\inf I(A_1:A_2:\ldots :A_m|E), \ee where infimum is taken over the extension states $\sigma_{A_1,\ldots, A_m,E}$ of the form $\sum p_i\rho_{\aaa}^{i}\otimes |i\>_{E}\<i|$.
\end{definition}

It is clear that $\ecsq$ is nothing but the entanglement measure obtained by mixed convex roof of multipartite quantum mutual information function. We have seen that multipartite quantum mutual information  satisfies the conditions of the theorem. This function  was considered in \cite{Christandl:2002} in the context of distillation of private key. Its bipartite version was also presented in \cite{Tucci2002-squashed,NagelR2003-ent-mutual}. By the theorem, we get that it is indeed a good entanglement measure. The measure is related to the q-squashed entanglement: it is obtained, if in the definition of ``squashing", one restricts to extensions such that the auxiliary system is classical.

Of course by replacing $I$ with $S_m$ we obtain another version of c-squashed entanglement. In bipartite case, it coincides with the previous one.

\subsection{Asymptotic continuity}
\label{sec:cont} In this subsection, we prove the asymptotic continuity of $\eqsq$ and $\ecsq$.
\begin{definition}
Let $f$ be a real-valued function $f$ defined on states $\rho_1,\rho_2$ acting on Hilbert space $C^d$ and $\epsilon=\|\rho_1-\rho_2\|_1$. Then a function is asymptotically continuous if it fulfils the following condition \be \forall_{\rho_1,\rho_2}|f(\rho_1)-f(\rho_2)|\le K_1\epsilon\log d+O(\epsilon), \ee where $K_1$ is constant and $O(\epsilon)$ is any function, which satisfies $\lim_{\epsilon\to 0}O(\epsilon)\to 0$ and depends only on $\epsilon$. (In particular, it does not depend on the dimension).
\end{definition}

\begin{proposition}
The multipartite q-squashed entanglement $\eqsq(\aaa)$ and c-squashed entanglement $\ecsq(\aaa)$ are asymptotically continuous.
\end{proposition}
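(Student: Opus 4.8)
\proof
The plan is to establish the two statements by genuinely different routes, matching the two mechanisms flagged in the introduction. For the c-squashed entanglement $\ecsq$ the argument is almost a citation. We have already observed that $\ecsq$ is precisely the mixed convex roof of the multipartite mutual information $I$. Now $I(\aaacolon)=\sum_i S(A_i)-S(\aaa)$ is a finite linear combination of von Neumann entropies, so by the Fannes--Audenaert inequality it is asymptotically continuous, with a dimension factor controlled by $\sum_i\log d_{A_i}$. The result of \cite{Synak05-asym} asserts that the mixed convex roof of an asymptotically continuous function is again asymptotically continuous; applying it with $g=I$ gives asymptotic continuity of $\ecsq$ directly. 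The same reasoning with $\Sn$, which is likewise a linear combination of entropies, covers the $\Sn$-version.

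For the q-squashed entanglement $\eqsq$ the infimum is over \emph{quantum} extensions, so the mixed-convex-roof machinery is unavailable and I would argue directly along the lines of the bipartite proof in \cite{Winter-squashed-ent}, whose engine is the Alicki--Fannes inequality \cite{Alicki-Fannes}. First I would set up a one-sided estimate: fix $m$-partite states $\rho$ and $\sigma$ with $\ep=\|\rho-\sigma\|_1$, choose an extension $\sigma_{\aaa E}$ of $\sigma$ achieving $\eqsq(\sigma)$ up to a tolerance $\delta$, and build from it an extension $\rho_{\aaa E}$ of $\rho$ on the \emph{same} register $E$. The existence of a nearby extension follows from Uhlmann's theorem, which produces an extension of $\rho$ whose fidelity with $\sigma_{\aaa E}$ equals the fidelity of the reduced states $\rho,\sigma$; the Fuchs--van de Graaf inequalities then convert $\ep$ into a bound on $\|\rho_{\aaa E}-\sigma_{\aaa E}\|_1$. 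Since $\rho_{\aaa E}$ is a feasible extension, $\eqsq(\rho)\le I(\aaacolon|E)_{\rho_{\aaa E}}$, while $I(\aaacolon|E)_{\sigma_{\aaa E}}\le\eqsq(\sigma)+\delta$; the whole problem therefore reduces to a continuity estimate for the conditional mutual information on the two close extensions, after which I would let $\delta\to0$ and symmetrize in $\rho,\sigma$.

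The crux is that this continuity estimate must not involve the dimension of $E$, which is unbounded. To achieve this I would invoke the bipartite decomposition of conditional mutual information, Eq.(\ref{eq:multi-bi-cond}), and rewrite each summand as a difference of conditional entropies, $I(A_i:A_1\ldots A_{i-1}|E)=S(A_i|E)-S(A_i|A_1\ldots A_{i-1}E)$. The Alicki--Fannes inequality bounds the variation of each such conditional entropy by a multiple of $\|\rho_{\aaa E}-\sigma_{\aaa E}\|_1\,\log d_{A_i}$ plus a dimension-free remainder vanishing with $\ep$; decisively, the coefficient of the logarithm is the dimension of the \emph{unconditioned} system $A_i$ alone, so $E$ never enters it. Summing the $m-1$ bipartite contributions yields $|I(\aaacolon|E)_\rho-I(\aaacolon|E)_\sigma|\le K_1\|\rho_{\aaa E}-\sigma_{\aaa E}\|_1\log d+O(\ep)$ with $d$ a product of the local dimensions, and inserting the trace-distance bound from the extension step gives the asymptotic continuity of $\eqsq$. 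The identical calculation, run through the analogous decomposition of $\Sn$ into bipartite conditional mutual informations, handles the $\Sn$-version.

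I expect the main obstacle to be exactly this $E$-independence. A naive application of the Fannes inequality to the individual entropies $S(A_iE)$ and $S(E)$ appearing in $I(\aaacolon|E)$ would carry the factor $\log(d_{A_i}d_E)$, which diverges as the extension grows and destroys the bound; only the reorganization into conditional entropies together with the Alicki--Fannes form removes the $d_E$ dependence. A secondary delicate point is the fidelity/trace-distance conversion in the extension step, which fixes the precise power of $\ep$ multiplying $\log d$ in the final estimate. \blacksquare
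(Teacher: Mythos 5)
Your proposal is correct and follows essentially the same route as the paper: for $\ecsq$ the inheritance of asymptotic continuity under the mixed convex roof construction (citing \cite{Synak05-asym}), and for $\eqsq$ the bipartite decomposition of Eq.~(\ref{eq:multi-bi-cond}) combined with the Alicki--Fannes bound \cite{Alicki-Fannes}, whose dimension factor involves only the unconditioned local system and never the extension $E$. The paper's proof is merely a two-sentence citation of these ingredients; the Uhlmann/Fuchs--van de Graaf step you use to transfer a near-optimal extension of one state to a nearby extension of the other on the same register $E$ is exactly the detail the paper leaves implicit, and you have supplied it correctly.
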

{\it Proof.} The asymptotic continuity of $\eqsq(\aaa)$ comes from Eq.(\ref{eq:multi-bi-cond}) and Alicki-Fannes result \cite{Alicki-Fannes} that asserts $I(A:B|E)$ is asymptotically continuous and $d=d_A$. The asymptotic continuity of $\ecsq(\aaa)$ comes from the fact that the mixed convex roof of a function inherit the asymptotic continuity of that function \cite{Synak05-asym}.

\subsection{Lockability property}
\label{se:lockable} In this subsection, we will calculate squashed entanglement for the multipartite flower state and show that multipartite squashed entanglement is lockable. A quantity, usually some measure of correlations, is lockable if it can drop by an arbitrary amount when only a single qubit of a local system is lost . This property has previously been observed for many correlation measures such as accessible information, entanglement cost and logarithmic negativity \cite{DiVincenzo-locking,lock-ent,KoenigRBM-locking}. The bipartite squashed entanglement has been proved to be lockable in \cite{CW-lock}. Here we adapt the proof to the multipartite case. First we generalize the flower state to multipartite case.
\begin{definition}
A tripartite flower state $\rho_{AA'BB'CC'}$ is defined through its purification state $|\Phi\>_{AA'BB'CC'X}$ of the form, \ben |\Phi\>_{AA'BB'CC'X} &=& \nonumber \\{1\over{\sqrt{2d}}}\sum_{i=1,\ldots,d\atop j=0,1}|i\>_A|j\>_{A'}|i\>_B|j\>_{B'}|i\>_C|j\>_{C'}U_j|i\>_X,
\een where $U_0=I$ and $U_1$ is a Fourier transform.
\end{definition}
\begin{proposition}
For the tripartite flower state, $E_{sq}^q(AA':BB':CC')=3+\log d$, $E_{sq}^c(AA':BB':CC')=3+{\frac{3}{2}}\log d$ for the version $I_3$ and $E_{sq}^q(AA':BB':CC')=E_{sq}^c(AA':BB':CC')=3+\log d$ for $S_3$ while $E_{sq}^{q(c)}(A:BB':CC')=0$ for both $I_3$ and $S_3$.
\end{proposition}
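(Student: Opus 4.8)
The plan is to work throughout with the given purification $|\Phi\rangle_{AA'BB'CC'X}$, to reduce the optimisation over extensions to an optimisation over channels (or measurements) acting on the single $d$-dimensional register $X$ (every extension $\sigma_{AA'BB'CC'E}$ arises as $\Lambda(X)$ for a channel $\Lambda$, since $X$ purifies $\rho$), and then to produce matching upper and lower bounds. The first step is pure bookkeeping: I would compute the marginal entropies of $|\Phi\rangle$. Using that the $i$-indices are copied onto $A,B,C$ and the $j$-indices onto $A',B',C'$, and that $\rho_X=\tr_{AA'BB'CC'}|\Phi\rangle\langle\Phi|=I/d$, one gets $S(X)=\log d$, hence $S(AA'BB'CC')=\log d$; moreover $\rho_{AA'}=\frac{I_d}{d}\otimes\frac{I_2}{2}$ gives $S(AA')=1+\log d$, and the two- and three-party reductions onto primed-and-unprimed systems are uniform mixtures of $2d$ mutually orthogonal pure states, so $S(AA'BB')=S(BB'CC')=1+\log d$. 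These yield the ``bare'' values $I_3=3+2\log d$ and $S_3=3+\log d$ directly from their definitions.

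For the upper bounds I would exhibit explicit extensions. For $\eqsq$ on $I_3$ take $E=X$: since $\rho_{AA'X}$ is a uniform mixture of the $2d$ orthogonal pure states $|i\rangle|j\rangle_{AA'}\otimes U_j|i\rangle_X$ one has $S(AA'X)=1+\log d$, and with $S(AA'BB'CC'X)=0$ and $S(X)=\log d$ the conditional-$I_3$ formula gives $I_3(AA':BB':CC'|X)=3(1+\log d)-0-2\log d=3+\log d$. For $\ecsq$ on $I_3$ I would instead measure $X$ in the computational basis; outcomes occur with probability $1/d$ and leave the \emph{pure} conditional states $|\psi_l\rangle\propto|l,0,l,0,l,0\rangle+\frac1{\sqrt d}\sum_i\omega^{li}|i,1,i,1,i,1\rangle$, for which $S(AA')=1+\frac12\log d$, so $I_3(|\psi_l\rangle)=3+\frac32\log d$ and the average agrees. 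For $S_3$ (both versions) the trivial extension already gives $3+\log d$, while on the pure $|\psi_l\rangle$ one has $S_3=I_3=3+\frac32\log d$, indicating that measuring $X$ only increases $S_3$.

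The main obstacle is the matching lower bounds, i.e. showing these extensions are optimal; here I would adapt \cite{CW-lock}. Invoking the chain rule of Lemma \ref{lem:chain}, I would split $I_3(AA':BB':CC'|E)$ (and the analogous $S_3$ expression) into a part carried by the perfectly correlated primed registers together with the cross terms, which is robust and contributes the additive constant $3$, and a ``locking'' part supported on $ABC$ which should contribute the $\log d$ (resp. $\frac32\log d$) term. Since every extension is $E=\Lambda(X)$ with $\dim X=d$, and the two branches $j=0,1$ encode the $i$-register in the mutually unbiased computational and Fourier bases, the crux is an entropic uncertainty relation: no $\Lambda$ can decorrelate $E$ from the $i$-information in both bases simultaneously, forcing the locking part to stay $\ge\log d$ for arbitrary quantum $E$ and $\ge\frac32\log d$ when $E$ is required to be classical (so that the infimum coincides with the mixed convex roof). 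For $S_3$ the same split should show that no quantum extension lowers the bare value $3+\log d$, so the trivial extension is optimal. This uncertainty-relation estimate, transplanted from the bipartite flower state to the present tripartite cut, is the step I expect to require the most care.

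Finally, the unlocking statement $E_{sq}^{q(c)}(A:BB':CC')=0$ is easy once the right reduction is made. Tracing out the qubit $A'$ kills all $j\neq j'$ coherences involving $A$, and a short computation gives $\rho_{ABB'CC'}=\frac1{2d}\sum_{i,j}|i\rangle\langle i|_A\otimes|ij\rangle\langle ij|_{BB'}\otimes|ij\rangle\langle ij|_{CC'}$, which is fully separable (classically correlated). I would then take the classical extension $E$ recording the label $(i,j)$ of the product terms; conditioned on $E$ the state is a product across $A:BB':CC'$, so both $I_3$ and $S_3$ vanish term by term, giving $\eqsq=\ecsq=0$ on this cut and exhibiting the advertised collapse from $3+\log d$ when the single qubit $A'$ is discarded. $\blacksquare$
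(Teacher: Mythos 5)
Your upper-bound constructions and the separability argument are correct and in fact more explicit than the paper's own treatment: the extension $E=X$ giving $I_3(AA':BB':CC'|X)=3+\log d$, the computational-basis measurement of $X$ producing pure conditional states with $I_3=3+\frac{3}{2}\log d$, the trivial extension for $S_3$, and the fully classical form of $\rho_{ABB'CC'}$ are exactly the achievability statements the paper merely asserts ("it is easy to check"). The genuine gap is the matching lower bounds, which are the entire substance of the proposition and which you explicitly leave as a plan ("the step I expect to require the most care"): you propose to re-derive an entropic uncertainty relation for the tripartite cut, adapted from \cite{CW-lock}. The paper needs no new uncertainty-type estimate at all. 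For an arbitrary extension $Z=\Lambda(X)$ it writes $I(AA':BB':CC'|Z)$ as one half times the sum of the three "bipartite" combinations $S(AA'Z)+S(BB'Z)-S(AA'BB'CC'Z)-S(Z)$ (and its two cyclic partners) plus the remainder $\frac{1}{2}\left[S(AA'BB'CC'Z)-S(Z)\right]$. Each of the three combinations equals the bipartite lock expression of \cite{CW-lock} evaluated on the bipartite flower state $\sigma$ with the \emph{same} channel applied to its purifying system --- because the marginals $\rho_{AA'X}$, $\rho_{BB'X}$ of the tripartite flower coincide with those of the bipartite flower, and $S_{\rho}(AA'BB'CC'Z)=S_{\sigma}(AA'BB'Z)$ by a purification/complement argument --- so each is $\geq 2+\log d$ with the bipartite result used as a black box. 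The remainder is $\geq -\frac{1}{2}\log d$ by subadditivity (Araki--Lieb), giving $3+\log d$; the identical split, with groupings $AA'BB'$ etc., handles $S_3$.

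Beyond being incomplete, your plan also mislocates where the classical/quantum separation comes from. You want the "locking part" itself to improve from $\log d$ to $\frac{3}{2}\log d$ when $E$ is classical; in the actual proof the three locking terms contribute $\frac{3}{2}(2+\log d)$ identically in both cases, and the extra $\frac{1}{2}\log d$ in $E_{sq}^{c}$ comes solely from the remainder term: a classical extension makes $\rho_{AA'BB'CC'Z}$ separable across the cut $AA'BB'CC':Z$, whence $S(AA'BB'CC'Z)-S(Z)\geq 0$ rather than merely $\geq -\log d$. (Likewise, the equality $E_{sq}^{q}=E_{sq}^{c}$ for $S_3$ is immediate in the paper because the optimal extension there is the trivial one, which is classical.) So even if your uncertainty-relation program could be carried out, you would be attacking a different and harder statement than the one needed; as written, the lower-bound half of the proposition --- and hence the claimed equalities rather than mere upper bounds --- is unproven.
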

{\it Proof.~} The minimization over state extensions $\rho_{AA'BB'CC'Z}$ in squashed entanglement is equivalent to a minimization over CPTP channels $\Lambda$ : $X\rightarrow Z$, acting on the purifying system $X$ for $\rho_{AA'BB'CC'}$, \be \rho_{AA'BB'CC'Z}=(id_{AA'BB'CC'}\otimes\Lambda)\Phi_{AA'BB'CC'X}. \ee Now we use the result proved in \cite{{CW-lock}} that asserts for the bipartite flower state $\sigma_{AA'BB'}$: \be S(AA'Z)+S(BB'Z)-S(AA'BB'Z)-S(Z)\ge 2+\log d,\label{eq:lock} \ee where the equality can be reached by trivial $Z$ (i.e. $Z$ is product with the rest) or by $Z=X$, or finally, by measuring $Z$ in basis $\{|i\>\}$.

For the q-squashed $I_3$,
\begin{eqnarray*}
I(AA':BB':CC'|Z)= AA'Z+BB'Z\nonumber\\
+CC'Z-AA'BB'CC'Z-2Z\nonumber\\
\end{eqnarray*}
\vskip-1cm
\begin{eqnarray}
&=& {1\over 2}\underbrace{[AA'Z+BB'Z-AA'BB'CC'Z-Z]}_{\ge 2+\log d}\label{eq:lock1}\\
&+&{1\over 2}\underbrace{[BB'Z+CC'Z-AA'BB'CC'Z-Z]}_{\ge 2+\log d}\label{eq:lock2}\\
&+&{1\over 2}\underbrace{[CC'Z+AA'Z-AA'BB'CC'Z-Z]}_{\ge 2+\log d}\label{eq:lock3}\\
&+&{1\over 2}\underbrace{[AA'BB'CC'Z-Z]}_{\ge -S(AA'BB'CC')=-\log d}\label{eq:lock4}\\
&\ge& 3+\log d,
\end{eqnarray}
where we omitted S() for brevity. Ineqs. (\ref{eq:lock1}),(\ref{eq:lock2}) and (\ref{eq:lock3}) come from Ineq.(\ref{eq:lock}) and the facts that $S_{\rho}(AA'Z)=S_{\rho}(BB'Z)=S_{\rho}(CC'Z)=S_{\sigma}(AA'Z)=S_{\sigma}(BB'Z)$ and $S_{\rho}(AA'BB'CC'Z)=S_{\sigma}(AA'BB'Z)$, Ineq.(\ref{eq:lock4}) from the subadditivity of von Neumann entropy. It is easy to check that $I(AA':BB':CC'|Z)=3+\log d$ if the extension state is the purification one.

If the register is classical, then $\rho_{AA'BB'CC'Z}$ is necessarily separable in $AA'BB'CC':Z$. Therefore Ineq.(\ref{eq:lock4}) should be replaced by $[S(AA'BB'CC'Z)-S(Z)]\ge 0$. On the other hand, if $\rho_{AA'BB'CC'Z}$ is the state after the measurement on $X$ in the basis $\{|1\>,\cdots,|d\>\}$, then the bound is achieved. So $E_{sq}^{c}=3+{\frac{3}{2}}\log d$.

For the q-squashed $S_3$,
\begin{eqnarray*}
S(AA':BB':CC'|Z)=AA'BB'Z+BB'CC'Z\nonumber\\+CC'AA'Z-2AA'BB'CC'Z-Z\nonumber
\end{eqnarray*}
\vskip-0.5cm
\ben
={1\over 2}\underbrace{[AA'BB'Z+BB'CC'Z-AA'BB'CC'Z-Z]}_{\ge 2+\log d}\label{eq:slock1}\\
+{1\over 2}\underbrace{[BB'CC'Z+CC'AA'Z-AA'BB'CC'Z-Z]}_{\ge 2+\log d}\label{eq:slock2}\\
+{1\over 2}\underbrace{[CC'AA'Z+AA'BB'Z-AA'BB'CC'Z-Z]}_{\ge 2+\log d}\label{eq:slock3}\\
+{1\over 2}\underbrace{[Z-AA'BB'CC'Z]}_{\ge -S(AA'BB'CC')=-\log d}\label{eq:slock4}\\
\ge 3+\log d,
\een
where again we omit S() for brevity. The above inequalities hold for similar reasons. One also can check that $S(AA':BB':CC'|Z)=3+\log d$ if the extension of the state is trivial (i.e. the state itself tensored with some state of $Z$) that also means the c-squashed entanglement is equal to the q-squashed one.

On the other hand, $\rho_{ABB'CC'}$ is separable, so $E_{sq}^{q(c)}(\rho_{ABB'CC'})=0$ for both versions. This shows that the multipartite squashed entanglement is lockable. \blacksquare

The results can be straightforwardly generalized to the $m$-partite flower state. In particular for $I_m$, we have $E_{sq}^q= m+\log d$, and $E_{sq}^c= m+\frac{m}{2}\log d$, while for $S_m$ we have $E_{sq}^q=E_{sq}^c=m+\log d$.

{\bf Remark:} It is notable that during the proof, the q-squashed entanglement of flower state is achieved by the purifying register for $I_3$ that is necessarily quantum (the equality in Ineq.(\ref{eq:lock4}) holds only if $AA'BB'CC'Z$ is in a pure state), while by trivial extension for $S_3$ that can be classical. Moreover the c-squashed entanglement can be arbitrarily larger than the q-squashed one. This gives the negative answer to the open question: `` Is q-squashed entanglement equal to c-squashed entanglement for the version of $I_m$?" for the multipartite case. For the bipartite case, we cannot see it from the flower state. Note that quite recently Brandao \cite{Brandao-PhD} has proved that $E_{sq}^c$ and $E_{sq}^q$ are different even in bipartite case, for antisymmetric Werner states. Now, one can also ask, whether the two versions of squashed entanglement, coming from $S_m$ and $I_m$ respectively, are different from each other. For pure states and the flower states, we see that q-squashed entanglement is equal for two versions. Are these two versions the same in the end? We don't know now. But as we have shown, for c-squashed entanglement the two versions {\it are} different. We then guess, that also for q-squashed entanglement they are different.

\section{Multipartite axiomatic monotones as an upper bound on distillable key}
\label{sec:upper-bound}

In this section we will show, that the quantum squashed entanglement is an upper bound on multipartite distillable
 key \cite{Augusiak-multi}. We will do this by proving that a wide class of entanglement monotones, which satisfy
some natural axioms are upper bounds on distillable key. That is we generalize the results of \cite{EkertCHHOR2006-ABEkey} (see also \cite{MC-phd}) to a multipartite case. We also revisit the analogous result within the so called classical key agreement framework \cite{Maurer_key_agreement} (see also \cite{Wyner_key_agreement,CsisarKorner_key_agreement}), that has been given in \cite{Cerf-secr-mono}. There, an even a more general approach has been developed: the rates of transition
 between the probability distributions under LOPC operations are considered, and the classical multipartite secrecy
monotones are shown to govern these rates. In section \ref{subsec:classical-ubounds} we show again that the classical secrecy monotones are upper bounds on classical distillable key, addressing more carefully the issue of asymptotic
 continuity and consider  more explicitly the presence of the eavesdropper.

In what follows, we will use a short-hand notation for multipartite systems. In particular ${\tilde{A}_1\ldots
 \tilde{A}_m}$ we denote as $\tsystems$. We will assume that $\tilde{A}_i=A_iA_i'$ and sometimes we denote
$\tsystems$ more explicitly as $\systems$. We denote also ${A_1\ldots A_m}$ as $\csystems$ and ${A_1'\ldots A_m'}$ as $\psystems$. Following this rule, the Hilbert space  $\hcal_{\tilde{A}_1}\ot\ldots\ot\hcal_{\tilde{A}_m}$ will be denoted as $\tmhilbert$, and $\hcal_{\tilde{A}_1}^{(n)}\ot\ldots\ot\hcal_{\tilde{A}_m}^{(n)}$ which is isomorphic with
 the tensor product of $n$ copies of $\tmhilbert$, as $\tmnhilbert$. Consequently, the state $\rho$ of such system
will be denoted as $\rho_{(m)\tilde{A}}$. Analogously we mean $\rho_{(m)A'}$ as acting on $B(\pmhilbert) =B(\hcal_{A_1'}\ot\ldots\ot\hcal_{A_m'})$ and $\rho_{(m)A}$ as the state acting on $B(\mhilbert) =B(\hcal_{A_1}\ot\ldots\ot\hcal_{A_m})$.

\subsection{ Multipartite key distillation}

Let us first recall the definition of multipartite distillable key in the context of private quantum states \cite{AH-pditdist}. It was introduced in \cite{AH-pditdist} and is studied in \cite{Augusiak-multi}, as a direct generalization of the bipartite case. Namely, instead of twisted maximally entangled bipartite states, the parties have to distill a twisted $GHZ$ state, so that $m$-partite private dit has a form:

\be \gamma_m^{(d)}=\sum_{i,j=0}^{d-1}{1\over d}|i...i\>\<j...j|\ot U_i\rho_{A_1',...,A_m'}U_j^{\dagger}, \ee where $U_i$ are some unitary operations acting on ${\cal H}_{A_1',...,A_m'}$.

The optimal rate of key $\log d \over n$ distilled by means of LOCC operations from $n$ copies of the input state is the multipartite distillable key, which is denoted as $K_D^{(m)}$. Let us note here, that in what follows, for technical reasons we will denote by $m$ the number of parties of a multipartite state and by $n$ the umber of copies of the state. More formally, we have:

{\definition For any given state
$\initrho\in \bcal(\tmhilbert)$
let us consider a sequence $P_n$ of $LOCC$ operations such that $P_n(\initrho^{\ot n})=\sigma_n$. 
A set of operations ${\cal P} \equiv \cup_{n=1}^{\infty} \{P_n\}$ is called a \pdit\  distillation protocol of state $\initrho$ if there holds \be \lim_{n\rightarrow \infty} ||\sigma_n-\gamma^{d_n}_m|| = 0, \ee where $\gamma_{d_n}$
is a multipartite \pdit\, whose  key  part is of dimension $d_n\times d_n$.

For a protocol $\cal P$, its rate is given by \be {\cal R}({\cal P})=\limsup_{n\rightarrow \infty} {\log d_n \over n}. \ee The distillable key of state $\initrho$ is given by \be K_D^{(m)}(\initrho)=\sup_{\cal P}{\cal R}(\cal P). \ee \label{def:key-rate} }

{\remark \label{rem:linear-com} In Sec. \ref{subsec:linear} we will show that the maximization in the definition of $K_D^{(m)}$ can be restricted to protocols whose communication complexity grows at most linearly in the number of copies of the initial state $\initrhoe$ (lemma \ref{lemma:linear}).  Moreover, the optimal yield can be obtained with the use of a number of systems (needed for implementing local operations), that grows also linearly in the number of the input copies (remark \ref{rem-ancillas}). Hence, if $d=\dim \tmhilberte <\infty$ then the dimension of the output of the protocol is bounded by $\log \dim \tmhilberte \leq c n \log d $, for some constant $c$. }


\subsection{Upper bounds from monotonicity}
\label{subsec:upper-monot} Following the paper by Ekert et al. \cite{EkertCHHOR2006-ABEkey}, we know that a function is an upper bound on distillable key $C_D$, if it satisfies some natural axioms. Their proof can be generalized to the multipartite case, as it does not depend on the number of parties. Actually, the only fact which is needed here is that there is a protocol $P$ with the property that on the $n$ copies of the $m$-partite state $\initrho$, satisfying \be ||\initrho-\gamma^{(d)}_m||\leq \ep \ee that uses only linear communication (in the number of copies of $\rho$) and has a rate close to $\log d$ (i.e. $\log d - f(\ep)$ with $f$ vanishing as $\ep$ approaches zero). More formally it is stated in the theorem below a multipartite version of the  Corollary 3.2. from \cite{EkertCHHOR2006-ABEkey}.

\begin{theorem}\label{thm:mul_monot_bound}
  Let $M(\rho)$ be a function mapping (m)-partite quantum states $\rho
  \equiv \rho_{\systems}$ into the positive numbers such that the following
  holds: \bee
\item Monotonicity: $M(\Lambda(\rho))\leq M(\rho)$ for any LOCC
  operation $\Lambda$.
  \label{cond-1}
\item Asymptotic continuity: for any states $\rho^n,\sigma^n$ acting on
  $\tmnhilbert$, the condition $\|\rho^n
  -\sigma^n\| \to 0$ implies $ {1\over \log r_n}\big|
  M(\rho^n)-M(\sigma^n)\big| \to 0 $ where $r_n=\dim \tmnhilbert$.
  \label{cond-2}
\item \label{cond-3} Normalisation: $ M(\gamma^{(\ell)})\geq\ell \ .$ \eee Then the
  \emph{regularisation} of the function $M$ given by $
  M^\infty(\rho)=\limsup_{n \to \infty} {M(\rho^{\ot n})\over n} $ is
  an upper bound on $K_D^{(m)}$, i.e., $M^\infty(\rho_{\systems}) \geq
  K_D^{(m)}(\rho_{\systems})$ for all $\rho_{\systems}$ with $\dim \tmnhilbert <\infty$.  If in addition $M$ satisfies \bee

\item [4.] Subadditivity on tensor products: $M(\rho^{\otimes n})\leq n M(\rho)$, \eee then $M$ is an upper bound for $K_D^{(m)}$.
\end{theorem}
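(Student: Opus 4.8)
The plan is to transcribe the bipartite argument of \cite{EkertCHHOR2006-ABEkey} to the multipartite setting, the point being that nothing in that argument uses the bipartition: the only ingredients are a good distillation protocol together with the three axioms. First I would fix an arbitrary \pdit\ distillation protocol $\{P_n\}$ of $\rho\equiv\rho_{\systems}$ with rate $R$, so that $\sigma_n:=P_n(\rho^{\ot n})$ obeys $\epsilon_n:=\|\sigma_n-\gamma^{(d_n)}_m\|\to 0$ and $\limsup_n\tfrac{\log d_n}{n}=R$. It suffices to prove $M^\infty(\rho)\geq R$, since taking the supremum over all protocols then gives $M^\infty(\rho)\geq K_D^{(m)}(\rho)$.

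The core of the proof is a short chain combining the three axioms. The monotonicity axiom applied to the LOCC map $P_n$ gives $M(\rho^{\ot n})\geq M(\sigma_n)$; the normalisation axiom gives $M(\gamma^{(d_n)}_m)\geq\log d_n$; and the asymptotic-continuity axiom applied to the pair $\sigma_n,\gamma^{(d_n)}_m$ (which act on one common output space) gives $|M(\sigma_n)-M(\gamma^{(d_n)}_m)|\leq\delta_n\log r_n$, where $r_n$ is the dimension of the output space and $\delta_n\to 0$ depends only on $\epsilon_n$. Putting these together yields
\be
M(\rho^{\ot n})\;\geq\;\log d_n-\delta_n\log r_n .
\ee

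The step I expect to be the main obstacle is controlling the correction $\delta_n\log r_n$ after dividing by $n$. This is precisely where the linear-communication-complexity statement of Remark~\ref{rem:linear-com} is needed: it guarantees $\log r_n\leq c\,n\log d$ for a constant $c$ and single-copy dimension $d=\dim\tmhilbert$, so that $\tfrac{1}{n}\log r_n$ remains bounded and therefore $\tfrac{1}{n}\delta_n\log r_n\to 0$. Dividing the displayed inequality by $n$ and taking $\limsup$ then gives $M^\infty(\rho)\geq\limsup_n\tfrac{\log d_n}{n}=R$. Had $\log r_n$ been allowed to grow faster than linearly in $n$, the correction would not vanish and the bound would break; hence verifying that the optimal yield is achievable with at most linear communication (and ancilla) overhead is the real crux of the argument.

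Finally, under the extra subadditivity hypothesis $M(\rho^{\ot n})\leq nM(\rho)$, one has $\tfrac{1}{n}M(\rho^{\ot n})\leq M(\rho)$ for every $n$, so $M^\infty(\rho)=\limsup_n\tfrac{1}{n}M(\rho^{\ot n})\leq M(\rho)$. Combining this with the bound already established gives $M(\rho)\geq M^\infty(\rho)\geq K_D^{(m)}(\rho)$, which is the claimed strengthening.
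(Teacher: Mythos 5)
Your proposal is correct and follows essentially the same route as the paper: the paper's proof simply defers to the bipartite argument of \cite{EkertCHHOR2006-ABEkey} (Theorem 3.1 and Corollary 3.2 there), noting that it carries over verbatim once one supplies the multipartite linear-communication lemma (Lemma \ref{lemma:linear}) and the linear ancilla bound (Remark \ref{rem-ancillas}), which is exactly the monotonicity--continuity--normalisation chain you spell out. You also correctly identify the control of $\log r_n$ via Remark \ref{rem:linear-com} as the genuine crux, which is precisely the point the paper singles out as the only new ingredient needed in the multipartite case.
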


\begin{proof}
This theorem follows as a corollary from the multipartite version of the theorem 3.1. of \cite{EkertCHHOR2006-ABEkey}. Since the proof of the multipartite version does not differ from the one for the tripartite case, we refer the reader to \cite{EkertCHHOR2006-ABEkey}. We note however, that the latter proof uses the fact stated in lemma A.1. of \cite{EkertCHHOR2006-ABEkey}. We provide in section \ref{subsec:linear} an analogous lemma  in the multipartite case - which proves the first statement of our remark \ref{rem:linear-com}. We show there that the amount of classical communication needed by an optimal key distillation protocol is linear in the number of systems $n$.

We note here also that the proof given in \cite{EkertCHHOR2006-ABEkey} based implicitly on the fact that only communication can increase the dimension of the support of the output state of an optimal key distillation protocol. Although this fact does hold, as it can be proved directly using quantum information techniques, we provide shorter argumentation in Remark \ref{rem-ancillas}.
\end{proof}

{\example (multipartite squashed entanglement as an upper bound on distillable key)

As an example of the application of theorem \ref{thm:mul_monot_bound}, we show here that the normalized multipartite squashed entanglement is an upper bound on the distillable key: \be {1\over m} E^{q}_{sq}(\rho) \geq K_D^{(m)}(\rho). \label{eq:example} \ee

To see this we observe that this monotone easily fulfils the conditions (\ref{cond-1}),(\ref{cond-2}) and (4) of theorem \ref{thm:mul_monot_bound}, as is shown in sections \ref{sec:mon},\ref{sec:cont} and \ref{sec:add} respectively. What remains to check the normalization condition which we show in the observation below.

}

{\Obs For any natural $m\geq 2$ there holds: $E_{sq}^q(\gamma^{(d)}_m) \geq m\log d$. }

{\it Proof.~} The proof is just a direct generalization of that given in \cite{MC-phd} for the bipartite case ($m=2$). In what follows we fix a natural $m\geq 3$ arbitrarily.

We consider a state $\rho_{\systems} = |\Psi_+^{(d)}\>\ot \rho_\psystems$ and its extension to system $E$ $\rho_{\systems E}\equiv \rho$. We have also $\gamma \equiv \gamma^{(d)}_{\systems E} = U\ot\id_E(\rho_{\systems E})U^{\dagger}\ot\id_E$, where $U = \sum_{i=0}^{d-1} |i\ldots i\>\<i\ldots i|\ot U_i^{\psystems}$ is a twisting operation.

We then have the following facts:
\begin{enumerate}
\item \label{fact:first} \ben S({\systems E})_\gamma = S({\systems E})_\rho \nonumber\\= S({\psystems E})_\rho = S({\psystems E})_{\gamma_i}, \een where $\gamma_i^{\psystems E} = U_i^{\psystems}\ot \id_E (\rho_{\psystems E})[U_i^{\psystems}]^{\dagger}\ot \id_E $. \item \label{fact:second} $S(E)_{\gamma_i} = S(E)_{\gamma}$, where $\gamma_E$ ($\gamma_i^E$) denotes the $E$ subsystem of $\gamma_{\systems E}$ ($\gamma^{\psystems E}_i$).

\item \label{fact:third} \be \forall_{i\in 1,\ldots,m} \,\, S(A_iA_i'E)_{\gamma} = \log d + \sum_{k=0}^{d-1} {1\over d} S(A_i'E)_{\gamma_k}. \ee
\end{enumerate}

The first fact is obvious due to unitarity relations between $\rho_{\systems E}$ and $\gamma_{\systems E}$, as well as $\rho_{\psystems}$ and $\gamma_i^{\psystems}$. The second one is just a consequence of the fact that $\gamma_i^E = \gamma_E$. The third one follows from the fact that the $A_iA_i'E$ subsystem is a $cqq$ state and from the joint entropy theorem \cite{Nielsen-Chuang}.

Having established these facts, the argumentation is straightforward. Let $E$ be the fixed extension. By definition of the squashed entanglement \cite{Tucci2002-squashed,Winter-squashed-ent} we have:
\ben
&&E^{q}_{sq}(\gamma_m^{(d)}) = \inf_E I(\systems |E)_{\gamma}\nonumber\\
&&=\inf_E [ \sum_{i=1}^m S(A_iA_i'E)_{\gamma} \nonumber\\
&&~~~~~~~~~~~~~~~- (m-1)S(E)_{\gamma}  -S(\systems E)_{\gamma}]\nonumber\\
&&= m\log d + \inf_E [\sum_{i=1}^m \sum_{k=0}^{d-1}{1\over d} S(A_i'E)_{\gamma_k} \\
&&~~~~~~~~~~~~~~~-(m-1)S(E)_{\gamma} - S({\systems E})_\gamma]\nonumber\\
&&= m\log d +\inf_E [\sum_{k=0}^{d-1}{1\over d}( \sum_{i=1}^m S(A_i'E)_{\gamma_k}\label{eq:secondd}\\
&&~~~~~~~~~~~~~~~-(m-1)S(E)_{\gamma_k} - S({\psystems  E})_{\gamma_k})]\nonumber\\
&&= m \log d + \inf_E \sum_{k=0}^{d-1}{1\over d}I(\psystems |E)_{\gamma_k} \label{eq:third}\\
&&\geq m\log d. \een

In  equality (\ref{eq:secondd}) we have used the fact (\ref{fact:third}) while in (\ref{eq:third}) the facts (\ref{fact:first}) and (\ref{fact:second}). The last inequality is thanks to the positivity of conditional multipartite mutual information. Since the choice of $m\geq 3 $ was arbitrary, we have proved the lemma.

\subsection{Linear communication cost of optimal multipartite key distillation}
\label{subsec:linear} In this section we show, that, as in the bipartite case, an optimal key distillation protocol needs only a linear amount of communication. To this end, we use the following theorem, that is proved in \cite{Augusiak-multi}:

{\theorem  For any state $\initrho$  of the systems $\tsystems$ with $\tilde{A}_i = A_iA_i'$, there is a protocol $\cal P$ with a rate: \be K_D^{\cal P}(\initrho)\geq \min_{i\neq 1} I(A_1:A_i)_{\psi_\rho} - I(A_1:E)_{\psi_\rho}, \ee where $\psi_\rho$ is a purification of the state $\initrho$ to system $E$. On $n$ input copies the protocol $\cal P$ uses $c n$ bits of classical communication, for some constant $c>0$. \label{theorem:main} }

{\remark Another property of the protocol given in \cite{Augusiak-multi}, is that it uses the ancillary systems to implement local operations with dimension that is only linear in the number of input states $n$.  To see this, we just observe that the party which broadcasts to the other parties does not need any ancillary systems, and the remaining $m-1$ parties needs no more of them than the rate of the key, which is clearly linear in $n$. \label{rem-ancillas} }

We can give now the final lemma which proves the statement from remark \ref{rem:linear-com}. This is a modified version of the lemma A.1., given in \cite{EkertCHHOR2006-ABEkey}.

\begin{lemma}
\label{lemma:linear} (Comp. lemma A.1. \cite{EkertCHHOR2006-ABEkey}).

  The maximization in the definition $K_D^{(m)}$ can be restricted to
  protocols that use communication at most linear in the number of
  copies of $\rho_{\systems}$. This implies that if $\dim \tmhilbert =d <\infty$ then $\dim \tmnhilbert \leq c n\log d$, where $c$ is a constant independent of $d$ and $n$.
\end{lemma}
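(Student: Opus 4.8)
The statement bundles two claims, and the plan is to treat them separately. The first --- that the supremum defining $K_D^{(m)}$ is unaffected by restricting to protocols whose classical communication is at most linear in the number $n$ of input copies --- has a trivial half (linear-communication protocols are a subfamily, so their supremal rate is at most $K_D^{(m)}(\rho_{\systems})$) and a substantive half (the reverse inequality, that no rate is lost). For the substantive half I would build near-optimal protocols out of the linear-communication primitive supplied by Theorem \ref{theorem:main}, which on any block of $N$ copies attains its stated rate while spending communication only linear in $N$. Running such a block protocol independently and in parallel on $k$ disjoint blocks of the $n=kN$ copies keeps the total communication equal to $k$ times a cost linear in $N$, i.e. linear in $n$; moreover the output is a tensor product of $k$ private states, which is again a private state whose key part is $k$ times as long, so the rate is preserved.

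The main obstacle is that this composition accumulates error: if each block terminates $\epsilon_N$-close to an ideal private state $\gamma^{(d)}_m$, the $k$-fold product is only $k\epsilon_N$-close, which for fixed $N$ violates the vanishing-error requirement of Definition \ref{def:key-rate} as $k\to\infty$. Following the template of Lemma A.1 of \cite{EkertCHHOR2006-ABEkey}, I would let the block length grow, $N=N(n)\to\infty$, slowly enough that the block protocol of Theorem \ref{theorem:main} still reaches within $o(1)$ of the optimal rate yet fast enough that the accumulated error $k(n)\,\epsilon_{N(n)}\to 0$; because that block protocol spends communication linear in its own block size, the reassembled protocol on $n=k(n)N(n)$ copies spends communication linear in $n$. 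Asymptotic continuity then upgrades this to a sequence of genuine vanishing-error linear-communication protocols whose rate tends to $K_D^{(m)}(\rho_{\systems})$, which is the reverse inequality. The delicate point throughout is precisely this interplay between making the blocks large (to approach the optimal rate and to suppress per-block error) and keeping the number of blocks large (to take the asymptotic limit), and it is here that linearity of the primitive's communication in its block size is essential.

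For the second claim I would tally the logarithms of the register dimensions carried by such a protocol. The $n$ input copies give $\log\dim\tmnhilbert=n\log d$; by Remark \ref{rem-ancillas} the ancillas needed to implement the local operations add at most $(m-1)$ times the key length and hence at most $(m-1)\,n\log d$; and the linear communication just established, at most $c_1 n$ bits, adds at most $c_1 n\le c_1\,n\log d$ since $\log d\ge 1$ for $d\ge 2$. Summing, the log-dimension of the output register is at most $(m+c_1)\,n\log d$, so the constant $c$ may be taken as $m+c_1$, an absolute constant independent of both $d$ and $n$, which is the bound asserted in Lemma \ref{lemma:linear}.
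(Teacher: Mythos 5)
Your plan for the substantive half has a genuine gap: the block primitive you parallelize is the wrong one. The protocol of Theorem \ref{theorem:main} applied to $N$ copies of $\rho_{\systems}$ achieves the rate $\min_{i\neq 1} I(A_1:A_i)_{\psi_\rho} - I(A_1:E)_{\psi_\rho}$, a single-letter quantity evaluated on the \emph{input} state. Enlarging the block length $N$ only suppresses that protocol's error; it does not move its rate, which in general is strictly below $K_D^{(m)}(\rho_{\systems})$ (it can even vanish while $K_D^{(m)}>0$). So no choice of $N(n)\to\infty$ makes your parallelized construction ``reach within $o(1)$ of the optimal rate'': that premise is false, and the delicate trade-off you describe between block size and number of blocks cannot be arranged. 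Conversely, if you instead take as block primitive the unrestricted near-optimal protocol $\Lambda_N$ with $N\to\infty$, you lose control of the communication, which may grow super-linearly in $N$ and hence in $n$.

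The paper escapes this dilemma with a two-stage construction that your proposal is missing. Fix $\epsilon>0$ and a \emph{constant} block size $n_0$ at which the arbitrary (communication-hungry) protocol $\Lambda_{n_0}$ produces a state $\epsilon$-close to $\gamma^{\ell_{n_0}}$ with $\ell_{n_0}/n_0 \geq R-\epsilon$, and run it block-wise; since $n_0$ is fixed, each block costs a fixed amount of communication, hence linear in $n$ overall. Then measure the key parts and discard the shields, obtaining i.i.d.\ copies of a ccq state $\sigma$, and apply the protocol of Theorem \ref{theorem:main} to those copies of $\sigma$, not to $\rho$. Because $\sigma$ is $\epsilon$-close to ideal key, Fannes' inequality guarantees its achievable rate is at least $\ell_{n_0} - 4\epsilon\ell_{n_0} - 6\eta(\epsilon)$, so the second stage re-distills nearly all of $\ell_{n_0}$ per block with linear communication, and its own asymptotic limit (in the number of blocks) drives the error to zero --- no growing-block-length argument is needed and the error-accumulation issue you struggle with never arises. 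Letting $\epsilon\to 0$ recovers the rate $R$. Your dimension count for the second claim is essentially fine (and matches Remark \ref{rem-ancillas}), but only once this two-stage construction is in place to justify the linear communication it relies on.
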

\begin{proof} In what follows we directly reproduce a bipartite proof given in \cite{EkertCHHOR2006-ABEkey}.

Let $\{\Lambda_n\}_{n \in \mathbb{N}}$ be a key distillation protocol (with communication not necessarily linear in $n$) with rate $R$. For any fixed $\epsilon >0$, there exists an $n_0$ such that \ben|| \Lambda_{n_0}(\rho_{\systems}^{\otimes n_0}) - \gamma^{\ell_{n_0}}_{\csystems} ||_1 \leq \epsilon,\een and  $\frac{\ell_{n_0}}{n_0} \geq R-\epsilon$. Consider now key distillation from many copies of $\sigma_{\csystems E}:=\Delta (\Lambda_{n_0}(\rho_{\systems}^{\otimes n_0}))$, where $\Delta$ is a measurement of Alice and Bob in their computational bases followed by tracing out the systems $\psystems$ (and the same for Bob). Using Fannes' inequality \cite{Fannes} we can bound the difference in the mutual informations by \ben \min_{i\neq 1} I(A_1:A_i)_{\psi_\sigma} -I(A_1:E)_{\psi_\sigma} \geq \ell_{n_0} - 4 \epsilon \ell_{n_0}  -6 \eta(\epsilon),\een where $\psi_\sigma$ is a purification of the state $\sigma$.
 As it is assured by theorem \ref{theorem:main}, Alice and Bob are now able to achieve a rate $\min_{i\neq 1} I(A_1:A_i)_{\psi_\sigma} -I(A_1:E)_{\psi_\sigma}$ using
communication linear in the number of copies of $\sigma_{\csystems}$,
We therefore have modified the protocol $\{\Lambda_n\}_{n \in \mathbb{N}}$ achieving a rate $R$ into a protocol $\{\tilde\Lambda_n\}_{n \in \mathbb{N}}$ with a rate \ben\tilde R\geq (1-c \epsilon) (R-\epsilon) - \frac{c' h(\epsilon)}{n_0}.\een The amount of communication needed is proportional to the number of copies of $\rho_{\systems}$. Since $\epsilon$ was arbitrary we obtain a sequence of protocols (with communication linear in the number of copies $\rho_{\systems}$) approaching the rate $R$.
\end{proof}

\subsection{Analogous results for multipartite classical key agreement}
\label{subsec:classical-ubounds}

In this section we will consider the analogous result to the ones given in theorem \ref{thm:mul_monot_bound}, in the realm of {\it classical key agreement} \cite{Maurer_key_agreement,Wyner_key_agreement,CsisarKorner_key_agreement}, that has been found in \cite{Cerf-secr-mono}. Within this paradigm, the distillable key from an $m+1$-partite distribution is the amount of key that can be obtained in the asymptotic limit from the $n$ realizations of a distribution via local operations and communication between $m$ parties $\csystems$ which is public i.e. listened to by the $m+1$-th party - Eve, per copy of a distribution. We will show that a function of an $m+1$-partite distribution, which satisfies axioms of monotonicity, continuity, normalization and subadditivity, is an upper bound on the amount of secure key that can be distilled from the $m+1$ partite distribution ({\it distillable key}).

In what follows $P_{(m)A}$ will denote a distribution defined on $\chi_{(m)A} = \chi_{A_1}\times\ldots\times\chi_{A_m}$. The $n$ such distributions resides consequently on $\chi_{A_1}^{n}\times\ldots\times\chi_{A_m}^n$, which is denoted as $\chi_{(m)A}^{n}$. Analogously, $\chi_{A_1'}^{n}\times\ldots\times\chi_{A_m'}^{n}\times E'$ will be denoted as $\mnchiep$.

For the sake of completeness of the presentation, we begin with definition of an LOPC protocol:

{\definition An LOPC protocol $\cal P$ is an family $\{\Lambda\}_n$ of classical channels $\Lambda_n:(\mnchie) \rightarrow \mnchiep$, which are a finite number of concatenation of local operations (local channels) and public communication steps (communicating some number to one honest party and the eavesdropper). }

We pass now to definition of distillable secure key, that is the key that can be obtained from a given distribution via an (optimal) LOPC protocol:

{\definition We say that LOPC protocol $\cal P$ is a classical key distillation protocol for a distribution $\mdist \in \mchi$, if \be \lim_{n\rightarrow \infty} ||\Lambda_n(\mdist^{\ot n}) - K_{\csystems}^{l_n}|| =0, \ee where $K_{\csystems}^{l_n} = {1\over l_n}(\ind)\delta_{\ind}\ot P_E$ is the ideal key distribution on $\mchi$, for some distribution $P_E$ of the eavesdropper. The rate of a protocol is given by \be \rcal(\pcal) = \limsup_{n\rightarrow \infty} {\log l_n \over n}. \ee Then the classical distillable key of a distribution $P_\csystems$ is defined as supremum of the rates \be C_D^{(m)}(P_\csystems) = \sup_{\pcal}\rcal(\pcal). \ee }

Consequently, we have the following theorem:

{\theorem (See also a general result of \cite{Cerf-secr-mono}) \label{thm:classical-version} Let $M$ be a function mapping $m+1$-partite classical distributions (defined on $\chi_m^n = \chi_{{\tilde{A}}_1}^{n}\times\ldots\times{\chi_{{\tilde{A}}_m}^{n}}\times{\chi_{E}^{n}}$) into positive numbers such that there holds:

\bee \item Monotonicity: $M(\Lambda(P))\leq M(P)$ for any LOPC
  operation $\Lambda$.
\item Asymptotic continuity: for any distributions $P^n,Q^n$ defined on
  $\chi_m^n$, the condition $\|P^n
  -Q^n\| \to 0$ implies $ {1\over \log r_n}\big|
  M(P^n)-M(Q^n)\big| \to 0 $ where $r_n=\dim \chi_m^n$ and $\tilde{A}_i = A_iA_i'$.
\item 
Normalisation: $ M(K_\csystems^{(\ell)})\geq\ell \ .$ \eee Then the
  \emph{regularisation} of the function $M$ given by $
  M^\infty(P)=\limsup_{n \to \infty} {M(P^{\ot n})\over n} $ is
  an upper bound on $C_D^{(m)}$, i.e., $M^\infty(P_{\systems}) \geq
  C_D^{(m)}(P_{\systems})$ for all $P_{\systems}$ with $\dim \chi_m^n <\infty$.  If in addition $M$ satisfies \bee

\item [4.] Subadditivity on tensor products: $M(P^{\otimes n})\leq n M(P)$, \eee then $M$ is an upper bound for $C_D^{(m)}$. }

{\proof The proof of this theorem is in full analogy to that of  the theorem \ref{thm:mul_monot_bound} hence we do not present it here. It needs however the facts analogous to that from remark \ref{rem:linear-com} which we present below.

We first argue, that the maximization in the definition of the distillable key $C_D^{(m)}$ can be taken over the protocols which have only linear amount of communication. To prove this, we observe, that the quantum multipartite Devetak-Winter (D-W) protocol provided in \cite{Augusiak-multi} can be applied also in the classical case, with appropriate change of quantum operations to classical data processing operations. Such modified quantum multipartite D-W protocol we will call a {\it classical multipartite D-W protocol}.

The quantum multipartite D-W protocol, as a multipartite extension of the Devetak-Winter protocol, works for multipartite ccq states. It is easy to see that is essentially classical i.e. can be done with help of local classical operations and public communication. Indeed, instead of quantum POVM's that are used by the honest parties in this protocol \cite{DevetakWinter-hash,Augusiak-multi}, one can consider classical operations, that are based on the classical Slepian-Wolf theorem \cite{CoverThomas}. Moreover, if some amount of key can be obtained from the multipartite ccq state, then it can be also obtained from this state after projecting onto Eve's subsystem turning it into a classical state. For this reason, the classical multipartie D-W protocol applied to the probability distribution gains at least that much of secure key, as the quantum one. Since the quantum multipartite D-W protocol uses only a linear amount of communication, so will  the classical one (see lemma \ref{lemma:linear}).

We argue now, that the dimension of the ancillary system used to implement local operations in the classical multipartite D-W protocol is also linear in the number of input copies. This holds by the same property of the quantum D-W protocol, that has been noted in remark \ref{rem-ancillas}.

In summary, we have that the dimension of the output variable of the honest parties is only linear in the number of input copies. This ends the proof of the above theorem. }

{\remark Note that from \cite{EkertCHHOR2006-ABEkey} it follows that the above theorem holds for a special class of distributions. These are distributions for which Eve's distribution is a function of Alice's and Bob's joint distribution.
Unfortunately, one can not follow this approach to extend the results to all distributions. This is because there are distributions for which $C_D \neq K_D$. For this reason we can not use directly the ``unified formalism'' given in \cite{EkertCHHOR2006-ABEkey} to get the result. }

{\example (multipartite intrinsic information as an upper bound on classical distillable key) We will show now, that there holds: \be {1\over (m-1)}I(\csystems \downarrow E)_P \geq C_D^{(m)}(P), \ee where $P$ is some distribution of variables $\csystems$. As in case of the multipartite squashed entanglement, we merely check that this function satisfies the axioms of the above theorem. In what follows, we will not assume, that the multipartite intrinsic information is realized  on a particular variable $E$, which seems however to be true due to it being true for the bipartite intrinsic information (see \cite{Christandl-Ren-Wolf-intrinsic}).}

{\it Asymptotic continuity.-} It is noted in \cite{Cerf-secr-mono}, that the multipartite intrinsic information is simply continuous in its argument (a weaker condition than that of asymptotic continuity).
The asymptotic continuity of this function follows from general result contained in proposition 2 of Ref. \cite{Synak05-asym} as explained in the bipartite case in section V.B of \cite{Synak05-asym} (see also \cite{renner-wolf-gap}). The reason is the same as in the case of mixed convex roof: the operation of "arrowing" preserves asymptotic continuity, and the continuity of conditional multipartite information follows from the Fannes inequality \cite{Nielsen-Chuang}.


{\it Monotonicity.-}The monotonicity of the multipartite intrinsic information for the case where Eve is in a product state with the honest parties is shown in \cite{Cerf-secr-mono}. It is also noted there that in the general case the monotonicity also holds and the proof is a direct generalization of the monotonicity proof they have provided, hence we just briefly sketch the idea, for completeness of presentation. To show monotonicity under local operations, it remains to rewrite the multipartite
 intrinsic information in terms of bipartite conditional mutual information and use the chain rule for the latter.
Then, the data processing inequality assures that the function can not increase. To see that communication can not increase it either, it is enough to see that it can be rewritten as if only  Eve had the communication. It
 is then clear that discarding such information can only increase the multipartite intrinsic information.

{\it Normalization.-}Since for the multipartite intrinsic information there is a representation given in Eq. (\ref{eq:multi-bi-cond}), it is easy to see that on the ideal distribution $K_{\csystems E}^{\log d }$ it takes value: \be I(K_{\csystems}^{\log d }|K_E^{\log d}) = \sum_{i=1}^m S(A_i) - S(\csystems) = (m - 1)\log d, \ee where $d$ is the dimension of the support of $A_1$. Hence, by theorem \ref{thm:classical-version}, we get that ${1\over (m-1)} I(\csystems \downarrow E) \geq C_D^{(m)}$.

{\example A similar result is true also for another monotone which is a version of $S_m$ extended to $m+1$ systems which we denote following \cite{Cerf-secr-mono}  as $S_m\downarrow$.
\ben
&&S_m\downarrow := \inf_{E\rightarrow \bare} [
I(A_1:A_2\ldots A_m|\bare)\nonumber \\
&&+ I(A_2:A_3\ldots A_m |A_1 \bare) +I(A_3:A_4\ldots A_m|A_2A_1\bare)\nonumber \\
&&+\ldots + I(A_{m-1}:A_m|A_{m-2}\ldots A_1\bare)]. \label{eq:multi-rule}. \een
To get rid of the infimum, we realize $S_m\downarrow$ via the sequence of channels $E\rightarrow \bare_n$. Then for each $n$ we observe that each term as in the above sum is just a difference of mutual informations.
 E.g. $I(A_2:A_3\ldots A_m|\bare_n) = I(A_2:A_3\ldots A_m \bare_n) - I(A_2:\bare_n A_1)$, that forms a `pair'.
It is then easy to see, that apart from the first such 'pair' which equals $\log$ of the dimension of the support of $A_1$, all the others are equal to zero. Hence we have that \be
 S_m\downarrow \geq C_D^{(m)}.
\ee }

\section{Conclusions}
\label{sum} In summary, we present two new multipartite entanglement measures, the generalized q-squashed entanglement and the c-squashed one.  There were obtained by squashing the conditional multipartite mutual information over the extension states (i.e. taking the infimum).
When the ancilla register is quantum we get the q-squashed entanglement,  and when it is classical we get the c-squashed one. Each measure has two versions corresponding to two possible multipartite mutual informations $I$ and $\Sn$. The q-squashed entanglements are additive, while for the c-squashed additivity is an open question.

Notice that we consider two types of correlation between the system and the ancilla in the extension states -- completely quantum correlation and classical correlation. Suppose now that  we allow only the extensions of the form $\sum p_i\rho^{i}_{\aaa}\otimes\phi_{E}^{i}$ where $\phi_{E}^{i}$ is an unnecessarily orthogonal states. That is we restrict the extension to separable states between the system and the ancilla. Is it still a good entanglement measure by squashing? The answer is yes. As a matter of fact, one can check that the proof of monotonicity of $\eqsq$ can be applied to any convex extension set. For example one can allow the extension to be $PPT$  state between the system and the ancilla. We thus have a whole family of entanglement measures.

We have shown also, that any multipartite entanglement measure which satisfies reasonable axioms, is an upper bound on the multipartite distillable key $K_D$. Using this fact we proved that q-squashed entanglement is an upper bound on $K_D$. It is not difficult to see that the same holds for the regularized mixed convex roof of the function $I_c$. We also revisit the analogous result in the classical realm given in \cite{Cerf-secr-mono},
 addressing more carefully the issue of asymptotic continuity,
as well as the presence of the eavesdropper.

{\bf Note added:} After completing this manuscript we have noticed a paper by G.A. Paz-Silva and J.H. Reina, quant-ph/0702177, which also proposed a procedure to build  entanglement measures based on mixed convex roof.
 Also, we have got to know, that Patrick Hayden and Ivan Savov
have independently proposed a multipartite version of squashed entanglement.

{\bf Acknowledgements}: The authors are grateful to the two anonymous referees for their detailed comments and suggestions that helped to improve the paper. M.H. would like to thank Piotr Badzi{\c a}g for discussions. We gratefully thank the Newton Institute, Quantum Information Science 2004 where much of this work was completed.


\bibliographystyle{IEEEtran}

\begin{biographynophoto}{Dong Yang}
received his Ph.D. in Physics from Zhejiang University in 2002. He
joined the Laboratory of Quantum Information at China Juliang
University in June 2007. Currently his research interests include
quantum information theory, entanglement theory, and quantum
computation.
\end{biographynophoto}
\begin{biographynophoto}{Karol Horodecki}
 graduated from University of Gda\'nsk, Gda\'nsk, Poland, in
2004, where he is currently teaching assistant at the Institute of Informatics.
having received the Ph.D. degree in computer science from the University of Warsaw in 2009.
He is a coauthor of several papers on processing of quantum information. His
research interests are in quantum information theory.
\end{biographynophoto}
\begin{biographynophoto}{Micha\l{} Horodecki}
graduated from University of Gda\'nsk, Gda\'nsk Poland, in
1997 and received the Ph.D. degree in physics from the same university in 2000.
In 2007, he was appointed Professor at the University of Gda\'nsk. He is a
coauthor of Quantum Information (New York: Springer-Verlag, 2001). His
main achievements include pioneering research on the entanglement of mixed
states, in particular, bound entanglement and quantum state merging. His
research interests are in quantum information theory and the foundations of
quantum physics.
\end{biographynophoto}
\begin{biographynophoto}{Pawe\l{} Horodecki} graduated from University of Gda\'nsk, Gda\'nsk, Poland, in
1995 and received the Ph.D. degree in physics from the Technical University of
Gda\'nsk, Gda\'nsk, Poland, in 1999.
In 2008, he became a Professor at the Technical University of Gda\'nsk. He
is a coauthor of Quantum Information (New York: Springer-Verlag, 2001). His
main achievements include pioneering research on the entanglement of mixed
states, in particular, bound entanglement. His research interests are in quantum
information theory and foundations of quantum physics.
\end{biographynophoto}
\begin{biographynophoto}{Jonathan Oppenheim}
graduated from the University of Toronto, Toronto, ON,
Canada, in 1993, and received the Ph.D. degree under B. Unruh from the University
of British Columbia, Vancouver, BC, Canada, in 2001.
Currently, he is Royal Society University Research Fellow at the Department
of Applied Mathematics and Theoretical Physics (DAMTP), University
of Cambridge, Cambridge, U.K. His research interests include quantum information
theory, foundations of quantum mechanics, quantum gravity, and black
hole thermodynamics.
\end{biographynophoto}
\begin{biographynophoto}{Wei Song}
received his Ph.D. in Physics from the University of Science and
Technology of China in 2008. Currently he works as a Lecturer in
the School of Physics and Telecommunication Engineering of the
South China Normal University. His research interests include
various aspects of quantum information science and quantum
simulation in condensed matter physics.
\end{biographynophoto}
\end{document}